\newtheoremstyle{noparens}%
{}{}%
{\itshape}{}%
{\bfseries}{.}%
{ }%
{\thmname{#1}\thmnumber{ #2}\mdseries\thmnote{ #3}}
\theoremstyle{noparens}
\newtheorem{assumption}{Assumption}
\newtheorem{symmetry assumption}{Symmetry Assumption}
\newtheorem{lemma}{Lemma}
\newtheorem{remark}{Remark}
\newtheorem{proposition}{Proposition}
\newtheorem{definition}{Definition}
\begin{document}
	
	\title{Federated Learning with Lossy Distributed Source Coding: Analysis and Optimization}
	
	\author{
		\IEEEauthorblockN{Huiyuan Yang, Tian Ding, Xiaojun Yuan,~\IEEEmembership{Senior Member,~IEEE}}
	\thanks{This work has been submitted to the IEEE for possible publication. Copyright may be transferred without notice, after which this version may no longer be accessible.}
	}
	
	

	\maketitle
	\begin{abstract}

		Recently, federated learning (FL), which replaces data sharing with model sharing, has emerged as an efficient and privacy-friendly machine learning (ML) paradigm. One of the main challenges in FL is the huge communication cost for model aggregation. Many compression/quantization schemes have been proposed to reduce the communication cost for model aggregation. However, the following question remains unanswered: What is the fundamental trade-off between the communication cost and the FL convergence performance? In this paper, we manage to answer this question. Specifically, we first put forth a general framework for model aggregation performance analysis based on the rate-distortion theory. Under the proposed analysis framework, we derive an inner bound of the rate-distortion region of model aggregation. We then conduct an FL convergence analysis to connect the aggregation distortion and the FL convergence performance. We formulate an aggregation distortion minimization problem to improve the FL convergence performance. Two algorithms are developed to solve the above problem. Numerical results on aggregation distortion, convergence performance, and communication cost demonstrate that the baseline model aggregation schemes still have great potential for further improvement.

	\end{abstract}
	
	\begin{IEEEkeywords}
		Federated learning, model aggregation, rate-distortion theory, distributed source coding, Berger-Tung coding, majorization-minimization.
	\end{IEEEkeywords}

	%
	\IEEEpeerreviewmaketitle

	\section{Introduction}
	
	Currently, there are nearly 6.3 billion smartphones \cite{O2021Number} and more than 11.3 billion connected Internet of Things (IoT) devices \cite{Sinha2021State} worldwide, which constantly collect/generate a wealth of data, such as videos, images, and measurements. In the conventional cloud-centric machine learning (ML) paradigm, all the training data is uploaded to a cloud server to produce effective inference models \cite{Lim2020Federated}. However, this centralized paradigm becomes unsatisfactory due to (i) the increasing sensitivity to data privacy and (ii) the increasing burden on the backbone network caused by the ever-growing data to be transmitted \cite{Imteaj2021A}. A decentralized ML paradigm called federated learning (FL) has been proposed to tackle these challenges. In FL, a number of user devices collaboratively train a global machine learning model with the help of a parameter server (PS). In each training iteration, the PS first broadcasts the parameters of the global model to some selected devices. Each selected device then computes a \emph{local update}, e.g., a gradient vector based on its local dataset, and then transmit it to the PS. Subsequently, the PS aggregates the locally computed updates to acquire a \emph{global update}. Finally, FL updates the parameters of the global model and proceeds to the next iteration.
	In this way, FL avoids direct data transmission and only requires exchanges of model parameters/updates, thereby reducing the overall communication cost without sacrificing data privacy. However, in each iteration, the selected user devices need to transmit their local updates to the PS, still entailing a significant volume of uplink transmission. Currently, the uplink communication cost appears to be a critical bottleneck in the employment of FL systems, especially for FL over wireless networks \cite{Liu2020Federated}.

	
	An interesting line of research to reduce the communication cost for FL over wireless networks is introducing the over-the-air computation (AirComp) technique into the FL uplink, referred to as over-the-air FL. In over-the-air FL, all selected user devices concurrently transmit their local updates using the same radio resource. By utilizing the signal superposition property of the multiple-access channel, AirComp has been shown to significantly relieve the communication bottleneck of FL \cite{Yang2020Federated, Liu2020Reconfigurable}. Nevertheless, over-the-air FL has some intractable deficiencies: (i) not directly deployable on current digital communication systems due to analog modulation; (ii) vulnerable to the stragglers \cite{Liu2020Reconfigurable}; (iii) difficult to combat the Byzantine attack effectively due to the uncoded nature \cite{Huang2021Byzantine}. Therefore, orthogonal FL uplink, where user devices are allocated with orthogonal resource units, is considered a more mature and practical setting by far.
	
	There is also a growing body of research aiming to design communication-efficient FL systems with orthogonal uplink. For example, the authors of \cite{McMahan2017Communication} proposed to select only part of the user devices to transmit their local updates. To save uplink communication resource, the selection is based on certain criteria such as link quality \cite{Luo2021Cost}. 
	Some works also proposed to exploit the sparsity in local gradients \cite{Aji2017Sparse, Lin2017Deep, Amiri2020Machine}. It was assumed that a portion of elements in local updates have very small magnitudes. These elements are considered negligible to the global model training, hence not being transmitted to the PS. The above schemes discard either a portion of local updates or a portion of elements in local updates. Such a coarse-grained discarding strategy could easily leave out exploitable information, potentially leading to a deterioration of the learning performance.
	
	
	Another popular approach to reduce the FL uplink cost is to adopt techniques of compression and/or quantization \cite{Bernstein2018signSGD, Alistarh2017QSGD, Wen2017Terngrad, Reisizadeh2020Fedpaq, Shlezinger2021UVeQFed}. For example, the authors in \cite{Bernstein2018signSGD} suggested to only transmit the signs of elements in the local updates, so as to reduce the payload of the uplink transmission. In \cite{Alistarh2017QSGD, Wen2017Terngrad, Reisizadeh2020Fedpaq}, various random scalar quantization methods are used to compress the local updates. The authors in \cite{Shlezinger2021UVeQFed} further proposed a lattice-based vector quantization scheme.
	
	
	In the above works for orthogonal FL uplink, the local updates from different user devices are treated as samples from independent information sources. The local updates are separately transmitted to the PS via orthogonal channels. Based on the received signals, the PS first decodes all the local updates independently, and then aggregates them to generate the global model update. However, it was observed that the local updates are not independent in practical machine learning tasks but possess a significant correlation among user devices \cite{Chen2020Scalecom, Zhong2021Over}. This correlation, if properly utilized, can potentially reduce a great amount of communication cost for model aggregation. Furthermore, the PS does not have to estimate all the local updates but only the global update, which is a function of the local updates. Naturally, we have the following question: What is the performance limit of federated learning, especially when the adopted model aggregation scheme takes full advantage of the above two properties?
	

	In this paper, we manage to answer the above question. The main contributions are listed as follows:
	\begin{itemize}
		\item We put forth a general information-theoretic analysis framework for the analysis of the model aggregation performance. In this analysis framework, the encoding, transmission, and aggregation (decoding) of the local updates are unified as a lossy distributed source coding (DSC) problem \cite{Oohama2005Rate, Wagner2008Rate, Krithivasan2009Lattices, Wagner2011On, El2011Network}.
		
		\item Under the proposed analysis framework, we derive an inner bound of the rate-distortion region of model aggregation by giving an achievability scheme.
		
		\item We conduct an FL convergence analysis to characterize the relationship between the FL convergence performance and the aggregate distortion. We further develop two algorithms (for general and symmetric FL systems, respectively) to search for the point with minimum aggregation distortion in our proposed inner region.
	\end{itemize}
	Numerical results are provided to evaluate the performance gap between baseline model aggregation schemes and our theoretical bound in terms of aggregation distortion, convergence performance, and communication cost. The results demonstrate that the baseline model aggregation schemes still have great potential for further improvement in the considered scenarios.

	The remainder of this paper is organized as follows. In Section \ref{Lossy-DSC-Framework-for-Federated-Learning}, we introduce the FL system and formulate a framework for aggregation performance analysis. In Section \ref{modified-Berger-Tung-coding}, we derive an inner bound of the rate-distortion region of model aggregation. Sections \ref{Coding-Parameters-Design} and \ref{MBTC-Parameters-Optimization-with-Symmetric-Assumptions} develop two algorithms to minimize the aggregation distortion for general and symmetric FL systems, respectively. In Section \ref{Numerical-Results}, we present the numerical results. Finally, conclusions are drawn in Section \ref{Conclusion}.


	%

	$Notation:$ Scalars, vectors, and matrices are denoted by regular letters (lower-case or upper-case), bold lower-case letters and bold upper-case letters, respectively. The transpose of a vector $\mathbf{a}$ or a matrix $\mathbf{A}$ is denoted by $\mathbf{a}^\top$ and $\mathbf{A}^\top$, respectively. $\mathbf{0}$, $\mathbf{1}$ denote all-zero or all-one vectors or matrices, respectively, and $\mathbf{I}$ denotes the identity matrix. We use $[\mathbf{a}]_m$ to represent the $m$-th element in vector $\mathbf{a}$. We use $\mathbb{R}^d$, $\mathbb{Z}_+$ to represent the $d$-dimensional Euclidean space and the positive integer set, respectively. We also use $[n]$ as the abbreviation of $\{1,2,\cdots,n\}$. Given a set $\mathcal{S} \!\subseteq\! [d]$, $\mathcal{S}^c$ denotes the set $[d] \backslash \mathcal{S}$. Given a vector $\mathbf{a} \!\in \! \mathbb{R}^{d}$ and a nonempty set $\mathcal{S} \!\subseteq\! [d]$, $\mathbf{a}^{\mathcal{S}}$ denotes the $|\mathcal{S}|$-dimensional vector obtained by removing all the $i$-th elements of $\mathbf{a}$ with $i \!\in\! \mathcal{S}^c$. Similarly, given a matrix $\mathbf{A} \!\in\! \mathbb{R}^{d_1 \times d_2}$ and nonempty sets $\mathcal{S}_1\!\subseteq\! [d_1]$, $\mathcal{S}_2 \!\subseteq\! [d_2]$, $\mathbf{A}^{\mathcal{S}_1, \mathcal{S}_2}$ denotes the $|\mathcal{S}_1|\!\times\! |\mathcal{S}_2|$ matrix obtained by removing all the ($i$, $j$)-th elements of $\mathbf{A}$ with $i \!\in\! [d_1]\backslash \mathcal{S}_1$ or $j \!\in\! [d_2] \backslash \mathcal{S}_2$. When $\mathcal{S}_1 \!=\! \mathcal{S}_2 \!=\! \mathcal{S}$, we will simplify the notation $\mathbf{A}^{\mathcal{S}_1, \mathcal{S}_2}$ as $\mathbf{A}^{\mathcal{S}}$. We use $\mathcal{N}(\mu, \sigma^2)$ to denote normal distribution with mean $\mu$ and variance $\sigma^2$, and $\mathcal{N}(\boldsymbol{\mu}, \boldsymbol{\Sigma})$ to denote the multivariate normal distribution with mean vector $\boldsymbol{\mu}$ and covariance matrix $\boldsymbol{\Sigma}$.
	
	Let $f$ be a scalar-value function with $n$ scalar inputs. We say that $f$ is applied element-wise on $n$ vectors $\mathbf{a}_1,\cdots, \mathbf{a}_n \in \mathbb{R}^d$, if $f$ outputs a vector $\mathbf{b} = f(\mathbf{a}_1,\cdots, \mathbf{a}_n) \in \mathbb{R}^d$ with each element given by $[\mathbf{b}]_i = f([\mathbf{a}_1]_i, \cdots, [\mathbf{a}_m]_i)$, $\forall i\in [d]$.

		\section{System Model and Aggregation Performance Analysis Framework}
	\label{Lossy-DSC-Framework-for-Federated-Learning}
	\subsection{Federated Learning System} \label{Federated Learning System}
	We consider a federated learning (FL) system comprising a central parameter server (PS) and $M$ distributed user devices. The objective of the FL system is to cooperatively train a global machine learning model (parameterized by vector $\boldsymbol{\theta}\in \mathbb{R}^N$) based on the data collected by the user devices. Specifically, in FL, each device $m$ is only allowed to access its local dataset $\mathcal{D}_m = \{\boldsymbol{\zeta}_{m, k}\}_{k=1}^{K_m}$, where $K_m$ is the sample size and $\boldsymbol{\zeta}_{m, k}$ is the $k$-th training sample collected by device $m$\footnote{For example, in supervised learning each training sample $\boldsymbol{\zeta}_{m, k}$ consists of a feature vector and a corresponding label.}. For each device $m$, we define a local empirical loss function with respect to the global model parameter $\boldsymbol{\theta}$, given by
	\begin{equation}
	\setlength{\abovedisplayskip}{3pt}
	\setlength{\belowdisplayskip}{3pt}
	L_m(\boldsymbol{\theta};\mathcal{D}_m) \triangleq \frac{1}{K_m} \sum_{k=1}^{K_m} l(\boldsymbol{\theta}; \boldsymbol{\zeta}_{m, k}),
	\end{equation}
	where $l(\boldsymbol{\theta}; \boldsymbol{\zeta}_{m, k})$ denotes the sample-wise loss function. FL aims to minimize the global empirical loss function, i.e.,
	\begin{equation}
	\setlength{\abovedisplayskip}{3pt}
	\setlength{\belowdisplayskip}{3pt}
	\min_{\boldsymbol{\theta}} L(\boldsymbol{\theta}) \triangleq \frac{1}{K} \sum_{m=1}^{M} K_m L_m(\boldsymbol{\theta};\mathcal{D}_m),
	\end{equation}
	where $K = \sum_{m=1}^M K_m$ denotes the total sample size.
	
	
	FL involves multiple rounds of iteration for convergence.
	At the $t$-th iteration round, FL  performs the following four steps:
	\begin{itemize}
		\label{FL_four_steps}
		\item [(i)] \emph{Model broadcast}: The PS broadcasts the current global model parameter $\boldsymbol{\theta}^{(t)}$ to all the devices.
		\item [(ii)] \emph{Local update computation}: Each device $m$ computes a local update $\mathbf{g}_{m}^{(t)} \in \mathbb{R}^N$ on the basis of the received $\boldsymbol{\theta}^{(t)}$ and the local dataset $\mathcal{D}_m$.
		
		\item[(iii)] \emph{Model Aggregation}:
		\begin{itemize}
			\item \emph{Encoding}: Each device $m$ properly encodes its local update $\mathbf{g}_{m}^{(t)}$ into codewords.
			\item \emph{Transmission}: Each device $m$ transmits its messages to the PS via a bit-constrained error-free link \cite{Kone2016Federated, Alistarh2017QSGD, Shlezinger2021UVeQFed}.\footnote{In this paper, we do not restrict the communication links to be wired or wireless, but only requires them to be independent/orthogonal without interfering with each other.}
			\item \emph{Decoding}: After the PS receives all the codewords, it performs joint decoding to obtain $\hat{\mathbf{g}}^{(t)}$, which is an estimation of global update $\mathbf{g}^{(t)}$.\footnote{Since the transmission links from the user devices to the PS are bit-constrained and thus can not transmit continuous-valued local updates losslessly, the PS ends up with only an estimation of $\mathbf{g}^{(t)}$.} The global update $\mathbf{g}^{(t)}$ is a function of the local updates, denoted by 
			\begin{equation}
			\label{eq::Model-aggregation}
			\mathbf{g}^{(t)} \triangleq
			\kappa\left(\mathbf{g}_1^{(t)}, \dots, \mathbf{g}_M^{(t)}\right),
			\end{equation}
			where $\kappa: \underbrace{\mathbb{R} \times \dots \times \mathbb{R}}_M \mapsto \mathbb{R}$ is the \emph{aggregation target function} and is applied element-wise to the $M$ local updates in \eqref{eq::Model-aggregation}.\footnote{A more general way is to define the aggregation target function as a vectors-to-vector mapping. However, this broader definition might cause unnecessary difficulty in understanding. Thus, we use an element-wise function here.}
		\end{itemize}
	
		
		\item[(iv)] \emph{Global model update}: The PS updates the global model by $\hat{\mathbf{g}}^{(t)}$ with a learning rate $\eta$, i.e., $\boldsymbol{\theta}^{(t+1)} = \boldsymbol{\theta}^{(t)} - \eta \hat{\mathbf{g}}^{(t)}$.
	\end{itemize}

	\subsection{Analysis Framework for Model Aggregation Performance}
	\label{sec::lossy-DSC-framework}
	In this subsection, we propose an analysis framework for model aggregation performance from a rate-distortion theory perspective \cite{Oohama2005Rate, Wagner2008Rate, Krithivasan2009Lattices, Wagner2011On, El2011Network}. 
	We consider the $M$ $N$-dimensional local updates at the $t$-th round to be $M$ vectors randomly generated from a joint distribution $\mathcal{P}^{(t)}_{N}$, i.e.,
	\begin{equation}
	\setlength{\abovedisplayskip}{3pt}
	\setlength{\belowdisplayskip}{3pt}
	\left(\mathbf{g}^{(t)}_1, \mathbf{g}^{(t)}_2, \cdots, \mathbf{g}^{(t)}_M\right) \sim \mathcal{P}^{(t)}_{N}.
	\end{equation}
	We emphasize that the elements in each $\mathbf{g}^{(t)}_m$ are not necessarily independently or identically distributed.
	In the following, we focus on an arbitrary iteration round $t$ and omit the superscript $(t)$ for brevity whenever causing no ambiguity. We further add a superscript of $N$ to clarify the model dimension, i.e., denoting the local update by $\mathbf{g}^N_m$, $\forall m\in[M]$.
	

	We now introduce the performance analysis framework for the model aggregation step. First, each user device employs an encoder \begin{equation}\label{encoder-mapping}
	\setlength{\abovedisplayskip}{3pt}
	\setlength{\belowdisplayskip}{3pt}
	f_m^N: \mathbb{R}^N ~\mapsto ~[B_m^N], ~~ \forall m \in [M],
	\end{equation}
	that maps its local update to a positive integer $b^N_m = f^N_{m}(\mathbf{g}^N_m)$ and transmits $b^N_m$ to the PS. Upon receiving the codewords $\{b^N_m\}_{m=1}^M$, the PS employs a joint decoder 
	\begin{equation}\label{decoder-mapping}
	\psi^N: [B_1^N] \times \dots \times [B_M^N]~ \mapsto~ \mathbb{R}^N,
	\end{equation}
	to generate $\hat{\mathbf{g}}^N = \psi^N(b_1^N, \cdots, b_m^N)$, which is an estimation of the global update $\mathbf{g}^N = \kappa(\mathbf{g}_1^N, \dots, \mathbf{g}_M^N)$. Let $d: \mathbb{R}^N \times \mathbb{R}^N \mapsto \mathbb{R}_+$ be a distortion measure between two $N$-dimensional vectors. Then, we have the following definition:
	\begin{definition}
		\label{achievalbe_definition}
		A rate-distortion tuple $(R_1, \dots, R_M, D)$ is said to be achievable if for any $\epsilon > 0$ and any sufficiently large $N$, there exists $M$ encoders $\{f_m^N\}_{m=1}^M$ and a joint decoder $\psi^N$ such that rate $\displaystyle{\frac{1}{N}}\log\left(B_m^N\right) \leq R_m + \epsilon$, $\forall m \in [M]$, and expected aggregation distortion $\mathbb{E}[d(\mathbf{g}^N, \hat{\mathbf{g}}^N)] \leq D + \epsilon$.
	\end{definition}
	Loosely speaking, if a rate-distortion tuple $(R_1, \cdots, R_M, D)$ is proved to be achievable, then, as the model dimension $N$ increases, the expected aggregation distortion $\mathbb{E}[d(\mathbf{g}^N, \hat{\mathbf{g}}^N)]$ can be less than and arbitrarily close to $D$ with the rate $\log\left(B_m^N\right)/N$ less than and arbitrarily close to $R_m$, $\forall m \in [M]$. We also call $R_m$ and $D$ rate and distortion, respectively. The \emph{rate-distortion region} of model aggregation is defined as the set of all achievable rate-distortion tuples, denoted by $\mathcal{RD}^{\star}$. In the next section, we analyze the model aggregation performance by characterizing an inner bound of $\mathcal{RD}^{\star}$ under the quadratic distortion measure and the linear aggregation target function.
	\begin{figure*}[htbp]
		\centering
		\includegraphics[width=13cm]{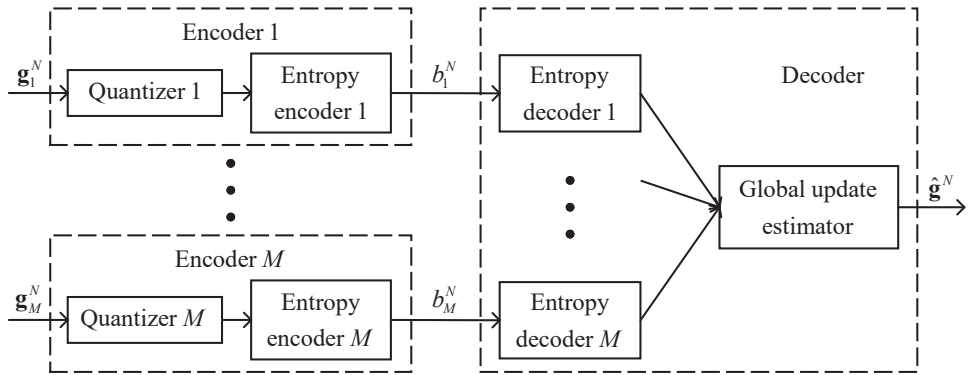}
		\caption{Conventional model aggregation schemes for orthogonal FL uplink.}
		\label{fig::uplink-baseline}
	\end{figure*}

	\begin{remark} 
	\label{remark-for-Fig1}
	Most of the model aggregation schemes in mainstream FL research \cite{Bernstein2018signSGD, Alistarh2017QSGD, Wen2017Terngrad, Reisizadeh2020Fedpaq, Shlezinger2021UVeQFed} are covered by our analysis framework in Fig. \ref{fig::uplink-baseline}.
	\end{remark}
	
	\begin{remark} 
		\label{literature_reviews}
		Our analysis framework for model aggregation performance is similar to the framework of the well-studied distributed function computation problem (please refer to \cite{Korner1979How, Krithivasan2009Lattices, Wagner2011On, Krithivasan2011Distributed, Lim2019Towards} and references therein for further details). The main difference between them is that our framework allows the existence of correlations among the elements of the local updates, which better matches the application of FL (where the elements of the local updates are generally correlated \cite{Xue2022FedOComp, Zhong2021Over}). More specifically, we essentially model the FL model aggregation problem as a lossy compression problem for sources with memory rather than for memoryless sources considered in the distributed function computation problem. In the next section, to give a constructive analysis, we introduce some randomness (in the random rotation step) to break these element-wise correlations.
		
			
	\end{remark}

	\section{An Inner Region of $\mathcal{RD}^{\star}$} \label{modified-Berger-Tung-coding}

\begin{figure*}[htbp]
	\vspace{-0.1cm}
	\setlength{\belowcaptionskip}{-0.3cm}
	\centering
	\includegraphics[width=13cm]{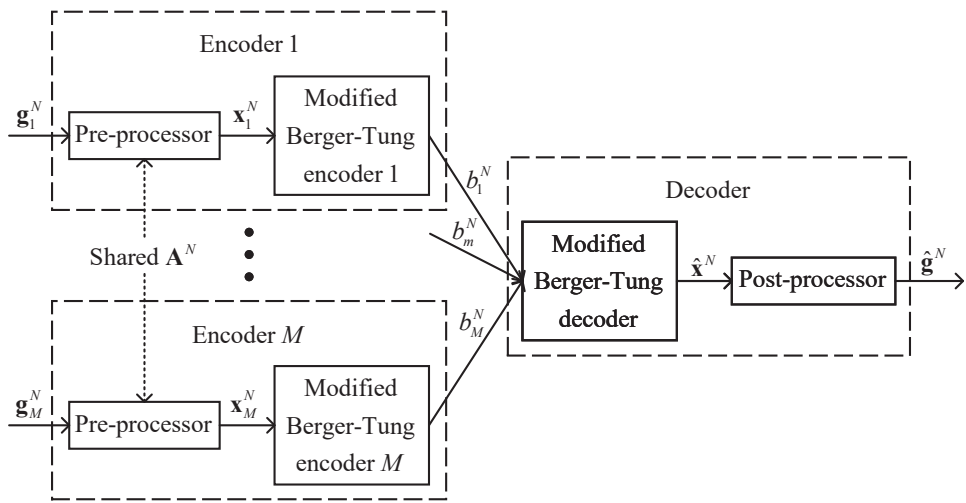}
	\caption{The achievability scheme characterizing the inner region $\mathcal{RD}_{\mathrm{in}}$.}
	\label{MA-BT}
\end{figure*}


In this section, we give an achievability scheme, which characterizes an inner region of $\mathcal{RD}^{\star}$, denoted by $\mathcal{RD}_{\mathrm{in}}$. Following the common practice \cite{Shlezinger2021UVeQFed, Chen2020Scalecom, Zhong2021Over, Abdi2019Reducing}, we consider a quadratic distortion measure $d(\mathbf{g}^N, \hat{\mathbf{g}}^N) = \|\mathbf{g}^N - \hat{\mathbf{g}}^N\|^2/N$ and a linear aggregation target function $\kappa(\mathbf{g}^N_1, \dots, \mathbf{g}^N_M) = \sum_{m=1}^{M} c_m \mathbf{g}^N_m$.

Fig. \ref{MA-BT} summarizes the achievability scheme. First, the local update of each device $m$ is fed into the $m$-th modified Berger-Tung encoder after pre-processing, and then each encoder encodes the preprocessed data into a codeword with rate $R_m$. The modified Berger-Tung decoder performs joint decoding after receiving all the codewords, whose output is post-processed to obtain an estimation of the global update. In the remainder of this section, We first introduce the specific operations performed by each module in Fig. \ref{MA-BT} and then characterize $\mathcal{RD}_{\mathrm{in}}$.

\subsection{Data Processing}
\label{Data-Processing}
\subsubsection{Data pre-processing}
The data pre-processing procedure consists of two steps: \emph{mean removal} and \emph{random rotation}. The main objective is to process the local updates so they can be modelled as samples from memoryless correlated Gaussian sources asymptotically.

In the mean removal step, Each device $m$ computes the average of the elements of $\mathbf{g}_m^{N}$ as $\bar{g}_m = \sum_{n=1}^N[\mathbf{g}_m^{N}]_n/N$, which are then uploaded to the PS.\footnote{We assume that the cost of transmitting the scalars $\{\bar{g}_m\}_{m=1}^M$ is negligible relative to that of transmitting the local updates.} Subsequently, each device $m$ computes the mean-removed vector
\begin{equation}
\label{mean_removal}
\tilde{\mathbf{g}}_m^{N} \triangleq \mathbf{g}_m^{N} - \bar{g}_m\mathbf{1}_N.
\end{equation}

In the random rotation 
step, the PS and all the devices generate a shared Haar distributed\footnote{That is, uniformly distributed on the set of orthogonal matrices.} matrix $\mathbf{A}^{N} \in \mathbb{R}^{N \times N}$ using public randomness. Each device $m$ computes
\begin{equation}\label{random-rotation}
\mathbf{x}_m^N = \mathbf{A}^{N} \tilde{\mathbf{g}}_m^{N} \in \mathbb{R}^{N}, \ \forall m \in [M],
\end{equation}
then feeds $\mathbf{x}^N_m$ into the modified Berger-Tung encoder.

In the following, we show that, under certain assumptions on $\{\mathbf{g}^N_m\}^M_{m=1}$, the resulting $\{\mathbf{x}_m^N\}_{m=1}^M$ can be asymptotically approximated by correlated Gaussian vectors in the sense of the quadratic distortion. We state the assumptions and the corresponding consequences as follows.
\begin{assumption} \label{linear-combination-assumption}
	(Correlation model) The $M$ sequences of the mean-removed vectors $\{\tilde{\mathbf{g}}_m^N \in \mathbb{R}^N \}_{N \in \mathbb{Z}_+}$, $m\in [M]$, can be modeled as $\tilde{\mathbf{g}}_m^{N} = \sum_{k=1}^K e_{m, k} \mathbf{p}_k^N$, where each $e_{m, k} \in \mathbb{R}$ is a constant coefficient, and the base vectors $\{\mathbf{p}_k^N \in \mathbb{R}^N\}_{N \in \mathbb{Z}_+}$, $k\in[K]$, are $K$ sequences of random vectors satisfying
	\begin{enumerate}[(i)]
		\item $\{\mathbf{p}_k^N\}_{k=1}^K$ are mutually independent, $\forall N \in \mathbb{Z}_+$;\label{Assumption1-ii-a}
		\item $\{\mathbf{p}_k^N\}_{k=2}^K$ are isotropically distributed\footnote{A random vector $\mathbf{x}$ is said to be isotropically distributed if, for any orthogonal matrix $\mathbf{A}$, $\mathbf{A}\mathbf{x}$ and $\mathbf{x}$ have the same distribution.}, $\forall N \in \mathbb{Z}_+$;\label{Assumption1-ii-b}
		\item $\lim_{N\rightarrow \infty}\|\mathbf{p}_k^N\|_2^2/N \overset{\text{a.s.}}{=}\tau_k^2$ for some $\tau_k > 0$, $\forall k\in[K]$;\label{Assumption1-ii-c}
	\end{enumerate}
\end{assumption}
Assumption \ref{linear-combination-assumption} models the mean-removed local updates $\{\tilde{\mathbf{g}}_m^N\}_{m=1}^M$ as linear combinations of a group of random vectors $\{\mathbf{p}_k^N\}_{k=1}^K$. The correlation between local updates comes from their shared base vectors. We note that one of the base vectors, $\mathbf{p}_1^N$, is allowed to be non-isotropic, and hence the mean-removed local updates $\{\tilde{\mathbf{g}}_m^N\}_{m=1}^M$ can possess a certain directional preference.

\begin{remark}
	We give the following justifications for Assumption \ref{linear-combination-assumption}:
	\begin{enumerate}
		\item  Recall that the local empirical loss function is written as a linear combination of a group of mutually independent components (i.e., the sample-wise loss functions), where the randomness of these components comes from the randomness of the sample generation process. This linear combination property tends to be inherited in model updates, for example, when the model updates are gradients of the local empirical loss functions. This justifies our linear combination assumption to some extent.
		\item In Assumption \ref{linear-combination-assumption}, all local updates are assumed to be weighted sums of the non-isotropic random vector $\mathbf{p}_1^N$ and some isotropic noise vectors $\{\mathbf{p}_k^N\}_{k=2}^K$. That is, if a local update possesses a certain directional preference, this directional preference come only from the base vector $\mathbf{p}_1^N$. This is consistent with a property of federated learning: every device hopes to update model parameters in the direction of the global update.
		\item Assumption \ref{linear-combination-assumption} allows the noises of local updates (i.e., $\{\sum_{k=2}^K e_{m, k} \mathbf{p}_k^N\}_{m=1}^M$) to be correlated by sharing common base vectors. In practice, the correlated noises are originated from common sources of randomness during the data collection/measurement.
		\item The well-known Gaussian Chief Executive Officer (CEO) model \cite{Berger1996The}, which is often used to model correlations in distributed systems\footnote{The CEO model has been used to model the correlation between local updates in \cite{Abdi2019Reducing}.}, is included as a special case of Assumption \ref{linear-combination-assumption}. In Gaussian CEO, local updates are modeled as $\tilde{\mathbf{g}}_m^{N} = \mathbf{p}_1^N + \mathbf{p}_{m+1}^N$, $\forall m \in [M]$, where $\{\mathbf{p}_1^N\} \cup \{\mathbf{p}_{m+1}^N\}_{m=1}^M$ are mutually independent random vectors with independent Gaussian elements. In our setting, $\mathbf{p}_1^N$ can be any random vector satisfying Assumptions \ref{linear-combination-assumption}-(\ref{Assumption1-ii-a}) and \ref{linear-combination-assumption}-(\ref{Assumption1-ii-c}), which covers a much broader class of distributions than independent Gaussian. 
		In particular, the elements of the local updates are generally correlated \cite{Xue2022FedOComp, Zhong2021Over}. Our model allows for such correlations, in contrast to the Gaussian CEO model. 
	\end{enumerate}
\end{remark}

\begin{proposition} \label{empirically-Gaussian}
	Consider a sequence of Haar distributed matrices $\{\mathbf{A}^{N}\in \mathbb{R}^{N \times N}\}_{N\in \mathbb{Z}_+}$ and $M$ sequences of mean-removed vectors $\{\tilde{\mathbf{g}}_m^N \in \mathbb{R}^N \}_{N \in \mathbb{Z}_+}$, $m \in [M]$. Assume that Assumption \ref{linear-combination-assumption} holds, and denote $\sigma_{m_1, m_2}^2 = \lim_{n\rightarrow \infty} \mathbb{E}[(\tilde{\mathbf{g}}_{m_1}^N)^\top \tilde{\mathbf{g}}_{m_2}^N] / N$, $\forall m_1,m_2 \in [M]$. Then, there exist $M$ sequences of random vectors $\{\tilde{\mathbf{x}}_m^N \in \mathbb{R}^N\}_{N\in \mathbb{Z}_+}$, $\forall m \in [M]$, such that
	\begin{enumerate}[(i)]
		\item $\tilde{\mathbf{x}}_m^N \sim \mathcal{N}\left(\mathbf{0}, \sigma_{m,m}^2 \mathbf{I}_N\right)$, $\forall m\in[M]$, $N\in \mathbb{Z}_+$;\label{Prop1-(i)}
		\item $[\tilde{\mathbf{x}}_1^N]_n, [\tilde{\mathbf{x}}_2^N]_n, \dots, [\tilde{\mathbf{x}}_M^N]_n$ are jointly Gaussian with $\mathbb{E}\left[[\tilde{\mathbf{x}}_{m_1}^N]_n\cdot[\tilde{\mathbf{x}}_{m_2}^N]_n\right] = \sigma_{m_1, m_2}^2$, $\forall m_1,m_2 \in [M]$, $n\in[N]$, $N\in \mathbb{Z}_+$;\label{Prop1-(ii)}
		\item $\lim_{N\rightarrow \infty}\left\| \mathbf{A}^{N} \tilde{\mathbf{g}}_m^{N} - \tilde{\mathbf{x}}_m^N\right\|^2_2 /N \overset{\text{a.s.}}{=} 0$, $\forall m \in [M]$.\label{Prop1-(iii)}
	\end{enumerate}
\end{proposition}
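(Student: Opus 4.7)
The plan is to construct the Gaussian vectors $\tilde{\mathbf{x}}_m^N$ explicitly as linear combinations $\tilde{\mathbf{x}}_m^N := \sum_{k=1}^K e_{m,k}\mathbf{z}_k^N$ that mirror the representation $\tilde{\mathbf{g}}_m^N = \sum_k e_{m,k}\mathbf{p}_k^N$, where $\mathbf{z}_1^N,\dots,\mathbf{z}_K^N$ will be independent with $\mathbf{z}_k^N\sim\mathcal{N}(\mathbf{0},\tau_k^2\mathbf{I}_N)$. Since $\mathbf{A}^N\tilde{\mathbf{g}}_m^N = \sum_k e_{m,k}(\mathbf{A}^N\mathbf{p}_k^N)$, $K$ is fixed, and $M$ is finite, the triangle inequality reduces (iii) to the per-$k$ claim
\begin{equation}
\lim_{N\to\infty}\frac{1}{N}\bigl\|\mathbf{A}^N\mathbf{p}_k^N - \mathbf{z}_k^N\bigr\|_2^2 \stackrel{\text{a.s.}}{=} 0, \quad \forall k\in[K].
\end{equation}
Claims (i) and (ii) will follow immediately from the construction once we verify the covariance identity $\sigma_{m_1,m_2}^2 = \sum_k e_{m_1,k}e_{m_2,k}\tau_k^2$: the cross-terms $\mathbb{E}[(\mathbf{p}_k^N)^\top\mathbf{p}_{k'}^N]/N$ vanish for $k\neq k'$ (by independence and the zero mean of any isotropic vector), leaving only the diagonal terms $e_{m_1,k}e_{m_2,k}\mathbb{E}[\|\mathbf{p}_k^N\|^2]/N\to e_{m_1,k}e_{m_2,k}\tau_k^2$ under Assumption 1(iii).

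The key step is the coupling. I would draw $K$ independent standard Gaussian vectors $\mathbf{w}_1^N,\dots,\mathbf{w}_K^N\sim\mathcal{N}(\mathbf{0},\mathbf{I}_N)$ and orthonormalize them by Gram-Schmidt into $\mathbf{u}_1^N,\dots,\mathbf{u}_K^N$; by rotational invariance of the standard Gaussian, these are distributed as the first $K$ columns of a Haar orthogonal matrix. Likewise I orthonormalize $(\mathbf{p}_k^N)_{k=1}^K$ into $(\hat{\mathbf{v}}_k^N)_{k=1}^K$, which is almost surely well-defined. I couple $\mathbf{A}^N$ so that $\mathbf{A}^N\hat{\mathbf{v}}_k^N = \mathbf{u}_k^N$ for $k\leq K$, extending its action on $\mathrm{span}(\hat{\mathbf{v}}_1^N,\dots,\hat{\mathbf{v}}_K^N)^{\perp}$ by an independent Haar matrix; this preserves both the marginal Haar distribution of $\mathbf{A}^N$ and its independence from $\{\mathbf{p}_k^N\}$. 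Setting $\mathbf{z}_k^N := \tau_k\mathbf{w}_k^N$, the $\mathbf{z}_k^N$'s are independent with the prescribed marginals. With this coupling, $\mathbf{A}^N\mathbf{p}_k^N$ lives in $\mathrm{span}(\mathbf{u}_1^N,\dots,\mathbf{u}_k^N)$ with coefficients equal to the Gram-Schmidt coefficients of $\mathbf{p}_k^N$, while $\mathbf{z}_k^N$ has an analogous expansion plus a component in $\mathrm{span}(\mathbf{u}_1^N,\dots,\mathbf{u}_K^N)^{\perp}$ of vanishing relative norm.

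The a.s.\ vanishing of $\|\mathbf{A}^N\mathbf{p}_k^N - \mathbf{z}_k^N\|_2^2/N$ then follows from standard concentration: Assumption 1(iii) gives $\|\mathbf{p}_k^N\|^2/N\to\tau_k^2$; the Strong Law gives $\|\mathbf{w}_k^N\|^2/N\to 1$ and $(\mathbf{w}_k^N)^\top\mathbf{w}_{k'}^N/N\to 0$ for $k\neq k'$; and the analogous near-orthogonality $(\mathbf{p}_k^N)^\top\mathbf{p}_{k'}^N/N\to 0$ for $k\neq k'$. Together these imply that the Gram-Schmidt coefficients satisfy $\alpha_{kk}/\sqrt{N}, \|\mathbf{w}_k^N\|/\sqrt{N}\to\tau_k/\tau_k\cdot\tau_k$ (matching up to $\tau_k$) while all off-diagonal coefficients vanish at scale $\sqrt{N}$, so the two expansions differ by a vector of $o(\sqrt{N})$ norm. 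The main obstacle is the last ingredient: since $\mathbf{p}_1^N$ need not be isotropic, the Strong Law does not apply directly to $(\mathbf{p}_1^N)^\top\mathbf{p}_k^N/N$. I would handle it by conditioning on $\mathbf{p}_1^N$ and using the isotropy of $\mathbf{p}_k^N$ ($k\geq 2$) to rewrite the conditional distribution of this inner product as that of $\|\mathbf{p}_1^N\|\cdot \xi_k^N$, where $\xi_k^N$ is a single coordinate of a uniform-on-sphere vector of radius $\|\mathbf{p}_k^N\|$; Lévy's concentration on the sphere then yields $\mathbb{P}(|\xi_k^N|>\epsilon\sqrt{N}\mid \mathbf{p}_1^N,\|\mathbf{p}_k^N\|)\leq 2e^{-c\epsilon^2 N}$, and Borel-Cantelli combined with Assumption 1(iii) delivers the required almost-sure convergence, completing the proof.
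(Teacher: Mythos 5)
Your proof is correct in outline, but it takes a genuinely different and more laborious route than the paper's. The paper's key move is to apply the rotation \emph{first}: set $\mathbf{z}_k^N \triangleq \mathbf{A}^N\mathbf{p}_k^N$ and observe that, because $\mathbf{A}^N$ is Haar and independent of the $\mathbf{p}_k^N$'s, the vectors $\{\mathbf{z}_k^N\}_{k=1}^K$ are automatically mutually independent and \emph{all} isotropic — including $\mathbf{z}_1^N$, since Haar rotation isotropizes the non-isotropic $\mathbf{p}_1^N$. Each isotropic $\mathbf{z}_k^N$ is then coupled to a standard Gaussian $\mathbf{q}_k^N$ purely by matching directions ($\mathbf{z}_k^N/\|\mathbf{z}_k^N\| = \mathbf{q}_k^N/\|\mathbf{q}_k^N\|$), and (iii) reduces to a one-dimensional radial computation $\|\mathbf{z}_k^N - \tau_k\mathbf{q}_k^N\|^2/N = (\|\mathbf{z}_k^N\|/\sqrt{N} - \tau_k\|\mathbf{q}_k^N\|/\sqrt{N})^2 \to 0$ a.s.\ via Assumption~1(iii) and the SLLN. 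By contrast, you fix the $\mathbf{p}_k^N$'s and engineer $\mathbf{A}^N$ to align the Gram--Schmidt frame of $\{\mathbf{p}_k^N\}$ with that of fresh Gaussians $\{\mathbf{w}_k^N\}$; this forces you to track all $O(K^2)$ Gram--Schmidt coefficients, establish almost-sure near-orthogonality $(\mathbf{p}_{k}^N)^\top\mathbf{p}_{k'}^N/N \to 0$ (which, as you correctly flag, needs L\'evy concentration plus Borel--Cantelli for the non-isotropic $\mathbf{p}_1^N$), and perform an unstated transfer argument to bring the coupled $\{\mathbf{w}_k^N\}$ back onto an extension of the original probability space (your construction builds a \emph{new} joint law of $(\mathbf{A}^N,\{\mathbf{p}_k^N\},\{\mathbf{w}_k^N\})$ rather than constructing $\tilde{\mathbf{x}}_m^N$ directly on the given one). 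The paper's rotate-then-couple order dissolves all three difficulties at once. Two small slips in your write-up: the claimed "component in $\mathrm{span}(\mathbf{u}_1^N,\dots,\mathbf{u}_K^N)^\perp$ of vanishing relative norm" for $\mathbf{z}_k^N$ is actually exactly zero (since $\{\mathbf{u}_j^N\}$ is the Gram--Schmidt of $\{\mathbf{w}_j^N\}$, $\mathbf{w}_k^N \in \mathrm{span}(\mathbf{u}_1^N,\dots,\mathbf{u}_k^N)$); and the expression "$\alpha_{kk}/\sqrt{N}, \|\mathbf{w}_k^N\|/\sqrt{N}\to\tau_k/\tau_k\cdot\tau_k$" is garbled, though the intent ($\alpha_{kk}/\sqrt{N}\to\tau_k$, $\tau_k\beta_{kk}/\sqrt{N}\to\tau_k$, off-diagonals $\to 0$) is clear and correct.
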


\begin{proof}[Proof]
	See Appendix \ref{proof-empirically-Gaussian}.
\end{proof}

By Proposition \ref{empirically-Gaussian}, under Assumption \ref{linear-combination-assumption}, for the quadratic distortion measure, the elements of each $\mathbf{x}_m^N$ in \eqref{random-rotation} can be asymptotically treated as i.i.d. zero-mean Gaussian variables, and $\{[\mathbf{x}_m^N]_n\}_{m=1}^M$ can be asymptotically treated as joint Gaussian variables with covariance matrix $\boldsymbol{\Sigma}_{X}$ satisfying $[\boldsymbol{\Sigma}_{X}]_{m_1,m_2}\!=\!\lim_{N\rightarrow \infty}\! \mathbb{E}[(\tilde{\mathbf{g}}_{m_1}^N)\!^\top \tilde{\mathbf{g}}_{m_2}^N] / N$, $\forall m_1,m_2 \in [M]$. This allows us to consider $\{[\mathbf{x}_1^N]_n, \dots, [\mathbf{x}_M^N]_n\}_{n=1}^N$ as $N$ samples generated from an $M$-component memoryless Gaussian source $(X_1, \dots, X_M) \sim \mathcal{N}(\boldsymbol{0},\boldsymbol{\Sigma}_{X})$,\footnote{Hereinafter, we also occasionally refer to the components of the $M$-component source as $M$ sources for convenience.}
and to apply the modified Berger-Tung coding to compress $\{\mathbf{x}_m^N\}_{m=1}^M$, as detailed in the next subsection.

\subsubsection{Data post-processing}
Given the output $\hat{\mathbf{x}}^N$ of the modified Berger-Tung decoder (which is an estimate of $\mathbf{x}^N \triangleq  \sum^M_{m=1}c_m \mathbf{x}^N_m$, as will be detailed in the next subsection), we perform the inverse process of data pre-processing to obtain an estimation of the global update $\mathbf{g}^N = \sum^M_{m=1}c_m \mathbf{g}^N_m$, i.e., the PS computes
\begin{equation}\label{g-hat}
\setlength{\abovedisplayskip}{3pt}
\setlength{\belowdisplayskip}{3pt}
\hat{\mathbf{g}}^{N} = (\mathbf{A}^{N})^{\top}\hat{\mathbf{x}}^{N} + \left(\sum_{m=1}^M c_m \bar{g}_m\right)\cdot\mathbf{1}_N.
\end{equation}

\subsection{Modified Berger-Tung Coding}
\label{Modified-Berger-Tung-Coding}
Under Assumption \ref{linear-combination-assumption}, sequence $\{[\mathbf{x}_1^N]_n, \dots, [\mathbf{x}_M^N]_n\}_{n=1}^N$ can be asymptotically treated as $N$ samples from an $M$-component memoryless Gaussian source $(X_1, \dots, X_M) \sim \mathcal{N}(\boldsymbol{0}, \boldsymbol{\Sigma}_{X})$. Since the local updates are correlated \cite{Abdi2019Reducing, Chen2020Scalecom, Zhong2021Over}, and only a function of the local updates needs to be recovered,
we modify Berger-Tung coding to compress $\{\mathbf{x}^N_m\}^M_{m=1}$.


Berger-Tung coding, as the achievability scheme of the Berger-Tung inner bound \cite{Berger1978Multiterminal, Tung1978MULTITERMINAL}, is a well-known random coding technique that non-cooperatively compresses multiple correlated information sources. Loosely speaking, in Berger-Tung coding, $M$ correlated sources $(X_1, \cdots, X_M)$ are encoded into codewords by $M$ separate encoders; based on the codewords, a joint decoder estimates all the sources $\{X_m\}_{m=1}^M$; the estimation performance is evaluated using $M$ preset distortion measures.


However, in our application, the decoder only aims to estimate a function $Y\triangleq \sum_{m=1}^M c_m X_m$ of the sources with quadratic distortion $\mathbb{E}[(Y-\hat{Y})^2]$, where $\hat{Y}$ denotes the output of the decoder.
To achieve this, we modify Berger-Tung coding by changing the constraints of codebook design to fit our application, i.e., changing the constraints from $\{\mathbb{E}[(X_m- \hat{X}_m)^2]\leq D_{BTC, m}\}_{m=1}^M$ to $\mathbb{E}[(Y- \hat{Y})^2]\leq D_{\text{MBTC}}$.\footnote{Please refer to \cite[Chapter 12]{El2011Network} for a detailed description.\label{Network-Information-Theory}} Following the achievability proof of the Berger-Tung inner bound\textsuperscript{\ref{Network-Information-Theory}}, any rate-distortion tuple in the following region can be achieved by our modified Berger-Tung coding (MBTC) scheme:

\begin{equation} \label{eq::MBTC_inner}
\begin{aligned} 
\mathcal{RD}_{\text{MBTC}} = 
\bigg\{ &
(R_{\textrm{MBTC}, 1}, \dots, R_{\textrm{MBTC},M}, D_\textrm{MBTC}):\text{$\exists$ $\hat{Y}:\mathbb{R}^M\rightarrow\mathbb{R}$ and $(U_1, \cdots, U_M)$, s.t.} \\
& \text{(i) } (\mathbf{x}_{/X_m}, \mathbf{u}_{/U_m}) \leftrightarrow X_m \leftrightarrow  U_m, \forall m \in [M]; \\ & \text{(ii) } \sum_{m\in\mathcal{S}}R_{\textrm{MBTC}, m} \geq I\left(\mathbf{x}^{\mathcal{S}}; \mathbf{u}^\mathcal{S} \mid \mathbf{u}^{\mathcal{S}^c}\right),~ \forall \mathcal{S} \subset [M], ~\mathcal{S} \neq \O;\\
& \text{(iii) } \sum_{m\in[M]}R_{\textrm{MBTC}, m} \geq I\left(\mathbf{x};\mathbf{u}\right);~~ \text{(iv) } D_\textrm{MBTC} \geq \mathbb{E}[(Y-\hat{Y}(\mathbf{u}))^2]
\bigg\},
\end{aligned}
\end{equation}
where $\mathbf{x} \triangleq [X_1, \dots, X_M]^{\top}$, $\mathbf{u} \triangleq [U_1, \dots, U_M]^{\top}$ denotes the auxiliary random vector, $\mathbf{x}_{/X_m}$ denotes the random vector obtained by deleting $X_m$ from $\mathbf{x}$, 
$I\left(\mathbf{x};\, \mathbf{u}\right)$ denotes the mutual information between $\mathbf{x}$ and $\mathbf{u}$, $I\left(\mathbf{x}^{\mathcal{S}};\, \mathbf{u}^\mathcal{S} \mid \mathbf{u}^{\mathcal{S}^c}\right)$ denotes the conditional mutual information between $\mathbf{x}^{\mathcal{S}}$ and $\mathbf{u}^\mathcal{S}$ given $\mathbf{u}^{\mathcal{S}^c}$, and the notation $(\mathbf{x}_{/X_m}, \mathbf{u}_{/U_m}) \leftrightarrow X_m \leftrightarrow U_m$ indicates that $(\mathbf{x}_{/X_m}, \mathbf{u}_{/U_m})$, $X_m$ and $U_m$ form a Markov chain in this order. 



\subsection{The Inner Region $\mathcal{RD}_{\mathrm{in}}$}
\label{summary}


Subsections \ref{Data-Processing} and \ref{Modified-Berger-Tung-Coding} have introduced the achievability scheme in Fig. \ref{MA-BT}. In this subsection, we derive the inner region $\mathcal{RD}_{\text{in}}$, a set of rate-distortion tuples that can be achieved using our proposed scheme.

Note that the expected aggregation distortion under our achievability scheme is given by
\begin{equation}
\label{the-same-distortion-measure1}
\begin{aligned}
\mathbb{E}[d(\mathbf{g}^N, \hat{\mathbf{g}}^N)]
= \frac{1}{N} \mathbb{E}\left[\left\|\sum_{m=1}^{M} c_m \mathbf{g}_m^N - \hat{\mathbf{g}}^N\right\|^2\right]
\stackrel{(a)}{=} \frac{1}{N} \mathbb{E}\left[\left\|\sum_{m=1}^{M} c_m \mathbf{x}_m^N - \hat{\mathbf{x}}^N\right\|^2\right],
\end{aligned}
\end{equation}
where the equality ($a$) follows from the orthogonality of matrix $\mathbf{A}^N$. The right-hand side of \eqref{the-same-distortion-measure1} happens to be the expected distortion of MBTC. Furthermore, from Fig. \ref{MA-BT}, the coding rates of MBTC are exactly the rates of our achievability scheme. Together, we make the following key observation: any rate-distortion tuple in $\mathcal{RD}_{\text{MBTC}}$ is numerically identical to a rate-distortion tuple in $\mathcal{RD}_{\text{in}}$ and vice versa.
This yields $\mathcal{RD}_{\text{in}}$, given by the following proposition.

\begin{proposition} \label{achievable-region-of-MBTC-based-MA-proposition}
	Under Assumption \ref{linear-combination-assumption}, considering distortion measure $d(\mathbf{a}, \mathbf{b}) = \|\mathbf{a} - \mathbf{b}\|^2/N$ with $\mathbf{a}, \mathbf{b} \in \mathbb{R}^N$ and aggregation target function $\kappa(a_1, \dots,$ $a_M) = \sum_{m=1}^{M} c_m a_m$, an inner region of $\mathcal{RD}^{\star}$, $\mathcal{RD}_{\mathrm{in}}$, is given by
	\begin{equation} \label{achievable-region-of-MBTC-based-MA}
	\begin{aligned} 
	\mathcal{RD}_{\mathrm{in}}= 
	\bigg\{&
	(R_1, \dots, R_M, D): \exists \hat{Y}: \mathbb{R}^M \rightarrow \mathbb{R} ~\mathrm{and}~ (U_1, \dots, U_M),~ \mathrm{s.t.}~\\	&\mathrm{(i)}\ (\mathbf{x}_{/X_m}, \mathbf{u}_{/U_m}) \leftrightarrow X_m \leftrightarrow U_m,\ \forall m \in [M], ~\mathbf{x}\sim \mathcal{N}(\boldsymbol{0}, \boldsymbol{\Sigma}_{X}); \\
	&\mathrm{(ii)}\ \sum_{m\in\mathcal{S}}R_m \geq I\left(\mathbf{x}^{\mathcal{S}};\, \mathbf{u}^\mathcal{S} \mid \mathbf{u}^{\mathcal{S}^c}\right),\ \forall \mathcal{S} \subset [M],\ \mathcal{S} \neq \O;\ \\ &\mathrm{(iii)}\ \sum_{m\in[M]}R_m \geq I\left(\mathbf{x};\, \mathbf{u}\right);\ \mathrm{(iv)}\ D \geq \mathbb{E}\left[(Y- \hat{Y}(\mathbf{u}))^2\right]
	\bigg\},
	\end{aligned}
	\end{equation}
	where $\hat{Y}(\cdot)$ is termed the reconstruction function, $\mathbf{x} \triangleq [X_1, \dots, X_M]^{\top}$, $\mathbf{u} \triangleq [U_1, \dots, U_M]^{\top}$, $Y\triangleq \sum^M_{m=1}c_m X_m$, and $\boldsymbol{\Sigma}_{X} \!\in\! \mathbb{R}^{M \times M}$ satisfies 
	$[\boldsymbol{\Sigma}_{X}]_{m_1,m_2} \!=\!\lim_{N\rightarrow \infty}\! \mathbb{E}[(\tilde{\mathbf{g}}_{m_1}^N)\!^\top \tilde{\mathbf{g}}_{m_2}^N] / N$, $\forall m_1,m_2 \in[M].$
\end{proposition}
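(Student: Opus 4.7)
The plan is to realise the three-stage scheme of Figure~\ref{MA-BT} and show that, for any tuple $(R_1,\ldots,R_M,D)\in\mathcal{RD}_{\mathrm{in}}$ witnessed by auxiliary variables $(U_1,\ldots,U_M)$ and reconstruction $\hat Y$, the resulting encoders and joint decoder satisfy Definition~\ref{achievalbe_definition}. The argument splits naturally into three pieces: a norm-preservation identity that reduces aggregation distortion to MBTC distortion, standard Berger-Tung achievability for i.i.d.\ Gaussian inputs with the modified single-function distortion constraint, and a perturbation argument that transfers the Gaussian achievability to the actual (non-Gaussian) rotated vectors.

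First I would establish the identity. Substituting \eqref{mean_removal}, \eqref{random-rotation}, and \eqref{g-hat} gives
\[
\sum_{m=1}^M c_m \mathbf{g}_m^N - \hat{\mathbf{g}}^N \;=\; (\mathbf{A}^N)^\top\!\left(\sum_{m=1}^M c_m \mathbf{x}_m^N - \hat{\mathbf{x}}^N\right),
\]
and orthogonality of $\mathbf{A}^N$ preserves Euclidean norm, so \eqref{the-same-distortion-measure1} is exact and the overall scheme's rate-distortion tuple coincides with the MBTC rate-distortion tuple on the rotated vectors. Next, I would invoke the standard random-binning achievability of the Berger-Tung inner bound \cite[Ch.~12]{El2011Network}, modifying only the codebook-generation constraint from $M$ per-source distortion bounds to the single aggregate bound $\mathbb{E}[(Y-\hat Y(\mathbf{u}))^2]\le D+\epsilon/2$. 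The test-channel construction $\prod_m p(u_m|x_m)$ and the rate analysis yielding conditions~(ii)-(iii) of \eqref{eq::MBTC_inner} are untouched; the distortion analysis now applies a typical-set average to $\hat Y$ rather than to the coordinate projections. This produces, for i.i.d.\ Gaussian inputs $\mathbf{x}\sim\mathcal{N}(\mathbf{0},\boldsymbol{\Sigma}_X)$ and sufficiently large $N$, encoders and a decoder of rates $\le R_m+\epsilon/2$ and expected distortion $\le D+\epsilon/2$.

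The main obstacle is that the input to MBTC is $\mathbf{x}_m^N = \mathbf{A}^N \tilde{\mathbf{g}}_m^N$, which is generally not i.i.d.\ Gaussian. I would close this gap using the coupling supplied by Proposition~\ref{empirically-Gaussian}: for each $N$ there exist genuine Gaussian vectors $\tilde{\mathbf{x}}_m^N\sim\mathcal{N}(\mathbf{0},\sigma_{m,m}^2\mathbf{I}_N)$ with the prescribed joint covariance and with $\|\mathbf{x}_m^N-\tilde{\mathbf{x}}_m^N\|^2/N \to 0$ almost surely. Designing the MBTC codebook for the limiting Gaussian law and feeding it $\mathbf{x}_m^N$, I would control the distortion via
\[
\bigl\|\hat{\mathbf{x}}^N - \textstyle\sum_{m} c_m \mathbf{x}_m^N\bigr\|^2 \le 2\bigl\|\hat{\mathbf{x}}^N - \textstyle\sum_{m} c_m \tilde{\mathbf{x}}_m^N\bigr\|^2 + 2\bigl\|\textstyle\sum_{m} c_m(\tilde{\mathbf{x}}_m^N-\mathbf{x}_m^N)\bigr\|^2,
\]
where the first term is handled by the Gaussian achievability applied to $\tilde{\mathbf{x}}_m^N$ and the second is $o(N)$ almost surely by Proposition~\ref{empirically-Gaussian}(iii) and Cauchy-Schwarz. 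The delicate technical point, which I expect to require the most care, is that the joint-typicality decoder calibrated to the Gaussian law still picks the correct codeword indices when fed the perturbed input; this I would verify by showing that $(\mathbf{x}_1^N,\ldots,\mathbf{x}_M^N)$ and $(\tilde{\mathbf{x}}_1^N,\ldots,\tilde{\mathbf{x}}_M^N)$ lie in the same strongly typical set with probability tending to one, exploiting continuity and uniform integrability of the quadratic distortion under the moment conditions implicit in Assumption~\ref{linear-combination-assumption}. Combining the three pieces yields rates $\le R_m+\epsilon$ and expected aggregation distortion $\le D+\epsilon$, establishing the claimed inner region.
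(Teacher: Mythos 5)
Your proposal takes essentially the same route as the paper: Figure~\ref{MA-BT}'s three-stage scheme, the exact cancellation of the rotation and the re-added means giving $\sum_m c_m\mathbf{g}_m^N-\hat{\mathbf{g}}^N=(\mathbf{A}^N)^\top(\sum_m c_m\mathbf{x}_m^N-\hat{\mathbf{x}}^N)$ so that orthogonality reduces aggregation distortion to MBTC distortion, Berger-Tung random-binning achievability with the per-source distortion constraints replaced by a single constraint on $\hat Y$, and Proposition~\ref{empirically-Gaussian} to supply the Gaussian surrogates. The paper itself presents this argument only informally in Section~\ref{summary}, asserting that after the random rotation the inputs ``can be asymptotically treated'' as i.i.d.\ Gaussian and that $\mathcal{RD}_{\text{MBTC}}$ and $\mathcal{RD}_{\text{in}}$ coincide numerically; it does not supply a dedicated appendix proof. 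Your write-up is therefore somewhat more rigorous than the paper at precisely the point you flag as delicate: you observe that the $\ell_2$-a.s.\ coupling of Proposition~\ref{empirically-Gaussian}(iii) does not automatically place $(\mathbf{x}_1^N,\ldots,\mathbf{x}_M^N)$ in the same (joint) typical set as $(\tilde{\mathbf{x}}_1^N,\ldots,\tilde{\mathbf{x}}_M^N)$, and that the Berger-Tung encoders and decoder, being typicality-based, could a priori misfire on the perturbed input even though the eventual quadratic distortion is insensitive to an $o(N)$ perturbation. This gap is genuinely present in the paper's argument as well, and your suggestion -- to pass through a distortion-typicality notion controlled by empirical second moments, which is exactly what the $\ell_2$ coupling bounds -- is the right way to close it for a Gaussian source under quadratic loss, though carrying it out requires a continuity lemma for the joint-typicality events that neither you nor the paper spells out. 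In short: same decomposition, same key lemmas, and your version is the more honest about where the remaining work lies.
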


	\section{Aggregation Distortion Minimization}
\label{Coding-Parameters-Design}
In this section, we develop an algorithm to minimize the aggregation distortion.
Specifically, we first conduct a convergence analysis and show that the optimization of the convergence rate can be transformed into the problem of aggregation distortion minimization. Subsequently, we put forth an algorithm to solve the distortion minimization problem.

\subsection{FL Convergence Analysis}

We now conduct a convergence analysis to establish the relationship between the FL convergence rate and the aggregation distortion, i.e., the mean square error of the estimated global update. To this end, we set the local update to the model gradient with respect to the local dataset, i.e., $\mathbf{g}_{m}^{(t)} = \nabla L_m(\boldsymbol{\theta}^{(t)})$, $\forall m \in [M]$, and make the following standard assumptions \cite{Amiri2020Machine, Fan2021Temporal}:
\begin{assumption} \label{ass::convergence_condition}
	The global loss function $L$ is strongly convex with parameter $\omega$, and has Lipschitz gradient with parameter $\Omega$, i.e., for any $\mathbf{a},\mathbf{b}\in\mathbb{R}^N$, 
	\begin{gather}
	L(\mathbf{b}) \geq L(\mathbf{a}) + \left(\mathbf{b} - \mathbf{a}\right)^\top \nabla L(\mathbf{a}) + \omega \left\|\mathbf{b} - \mathbf{a}\right\|^2/2 \\
	\|\nabla L(\mathbf{b}) - \nabla L(\mathbf{a})\| \leq \Omega \|\mathbf{b} - \mathbf{a}\|. \end{gather}
\end{assumption}
Proposition \ref{FL-convergence} gives a characterization of the FL convergence performance.
\begin{proposition} \label{FL-convergence}
	Under Assumption \ref{ass::convergence_condition}, consider $\mathbf{g}_{m}^{(t)} = \nabla L_m(\boldsymbol{\theta}^{(t)})$, $\forall m \in [M]$, and set the learning rate $\eta = 1/\Omega$. After $T$ communication rounds,
	\begin{equation}\label{FL-convergence-equation}
	\begin{aligned}
	L(\boldsymbol{\theta}^{(T+1)}) - L(\boldsymbol{\theta}^{\star}) \leq \left(1 - \frac{\omega}{\Omega}\right)^{T+1}\left[L(\boldsymbol{\theta}^{(0)}) - L(\boldsymbol{\theta}^{\star})\right] + \frac{N}{2\Omega} \sum_{t=0}^T \left(1 - \frac{\omega}{\Omega}\right)^{T-t}  D^{(t)},
	\end{aligned}
	\end{equation}
	where $D^{(t)}$ denotes the aggregation distortion at the $t$-th round.
	
\end{proposition}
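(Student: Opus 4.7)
The plan is to establish a one-step descent lemma that contracts the suboptimality gap by $(1-\omega/\Omega)$ and leaves an additive error proportional to the squared estimation error of the aggregated gradient, then to unroll the recursion over $T+1$ rounds. Throughout, I would write $\mathbf{e}^{(t)} \triangleq \hat{\mathbf{g}}^{(t)} - \mathbf{g}^{(t)}$ so that $\boldsymbol{\theta}^{(t+1)} = \boldsymbol{\theta}^{(t)} - \eta(\nabla L(\boldsymbol{\theta}^{(t)}) + \mathbf{e}^{(t)})$, using the fact that, with the aggregation coefficients chosen so that $\sum_m c_m \nabla L_m = \nabla L$, the noiseless global update equals $\nabla L(\boldsymbol{\theta}^{(t)})$. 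Note that by the definition of the quadratic distortion measure in Section III, $\mathbb{E}\|\mathbf{e}^{(t)}\|^2 = N D^{(t)}$; this is the bridge that will supply the factor $N/(2\Omega)$ in the final bound.

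First I would apply the $\Omega$-smoothness inequality from Assumption 2 to the pair $(\boldsymbol{\theta}^{(t+1)}, \boldsymbol{\theta}^{(t)})$ to obtain
\begin{equation*}
L(\boldsymbol{\theta}^{(t+1)}) \leq L(\boldsymbol{\theta}^{(t)}) + \nabla L(\boldsymbol{\theta}^{(t)})^\top(\boldsymbol{\theta}^{(t+1)} - \boldsymbol{\theta}^{(t)}) + \frac{\Omega}{2}\|\boldsymbol{\theta}^{(t+1)} - \boldsymbol{\theta}^{(t)}\|^2.
\end{equation*}
Substituting $\boldsymbol{\theta}^{(t+1)} - \boldsymbol{\theta}^{(t)} = -\eta\hat{\mathbf{g}}^{(t)} = -\Omega^{-1}(\nabla L(\boldsymbol{\theta}^{(t)}) + \mathbf{e}^{(t)})$ and expanding the square, the cross terms $\nabla L^\top \mathbf{e}^{(t)}$ cancel and one is left with
\begin{equation*}
L(\boldsymbol{\theta}^{(t+1)}) \leq L(\boldsymbol{\theta}^{(t)}) - \frac{1}{2\Omega}\|\nabla L(\boldsymbol{\theta}^{(t)})\|^2 + \frac{1}{2\Omega}\|\mathbf{e}^{(t)}\|^2.
\end{equation*}
This is the key cancellation that justifies the learning-rate choice $\eta = 1/\Omega$.

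Next I would invoke the Polyak--Łojasiewicz consequence of $\omega$-strong convexity: minimizing the lower bound $L(\mathbf{b}) \geq L(\boldsymbol{\theta}^{(t)}) + \nabla L(\boldsymbol{\theta}^{(t)})^\top(\mathbf{b}-\boldsymbol{\theta}^{(t)}) + (\omega/2)\|\mathbf{b}-\boldsymbol{\theta}^{(t)}\|^2$ over $\mathbf{b}$ and evaluating at $\mathbf{b}=\boldsymbol{\theta}^\star$ gives $\|\nabla L(\boldsymbol{\theta}^{(t)})\|^2 \geq 2\omega (L(\boldsymbol{\theta}^{(t)}) - L(\boldsymbol{\theta}^\star))$. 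Substituting this, subtracting $L(\boldsymbol{\theta}^\star)$ from both sides, and taking expectations, I arrive at the one-step recursion
\begin{equation*}
\mathbb{E}[L(\boldsymbol{\theta}^{(t+1)}) - L(\boldsymbol{\theta}^\star)] \leq \Bigl(1-\tfrac{\omega}{\Omega}\Bigr)\,\mathbb{E}[L(\boldsymbol{\theta}^{(t)}) - L(\boldsymbol{\theta}^\star)] + \frac{N D^{(t)}}{2\Omega}.
\end{equation*}

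Finally, I would iterate this recursion from $t=0$ to $t=T$, which immediately produces the closed-form bound \eqref{FL-convergence-equation} with the geometric prefactor $(1-\omega/\Omega)^{T+1}$ on the initial gap and the discounted sum $\sum_{t=0}^{T}(1-\omega/\Omega)^{T-t} D^{(t)}$. I do not anticipate any real obstacle: the only subtlety is to make explicit that $\mathbf{g}^{(t)} = \nabla L(\boldsymbol{\theta}^{(t)})$ under the assumed aggregation weights and to convert the per-coordinate distortion $D^{(t)}$ back to the Euclidean error $\mathbb{E}\|\mathbf{e}^{(t)}\|^2 = N D^{(t)}$ through the normalization in the definition of $d(\cdot,\cdot)$, which is what produces the $N$ in the numerator of the additive term.
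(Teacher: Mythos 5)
Your proof is correct and follows essentially the same route as the paper: establish the one-step contraction $L(\boldsymbol{\theta}^{(t+1)}) - L(\boldsymbol{\theta}^{\star}) \leq (1-\omega/\Omega)[L(\boldsymbol{\theta}^{(t)}) - L(\boldsymbol{\theta}^{\star})] + \|\mathbf{e}^{(t)}\|^2/(2\Omega)$, unroll it over $T+1$ rounds, and replace $\|\mathbf{e}^{(t)}\|^2$ by $ND^{(t)}$ via the normalization in the quadratic distortion measure. The only difference is cosmetic: the paper invokes this one-step bound directly as a cited lemma (Friedlander and Schmidt, Lemma 2.1), whereas you re-derive it from smoothness, the exact cancellation of the cross term at $\eta=1/\Omega$, and the PL inequality implied by strong convexity; and where you write $\mathbb{E}\|\mathbf{e}^{(t)}\|^2 = ND^{(t)}$ the paper uses the asymptotic inequality $\|\mathbf{e}^{(t)}\|^2 \leq ND^{(t)}$ (as $N\to\infty$), which is what the achievability of distortion $D^{(t)}$ actually guarantees and is all the bound requires.
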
 

\begin{proof}[Proof]
	See Appendix \ref{proof-FL-convergence}.
\end{proof}

\subsection{Problem Formulation}
\label{problem-formulation}
The upper bound in Proposition \ref{FL-convergence} is a monotonically increasing function of each $D^{(t)}$, implying that we can potentially improve the FL convergence performance by separately minimizing each $D^{(t)}$. Recall that any rate-distortion tuple in $\mathcal{RD}_{\mathrm{in}}$ is achievable. Thus the problem reduces to finding a rate-distortion tuple in $\mathcal{RD}_{\mathrm{in}}$ that minimizes the distortion $D^{(t)}$. However, there are two pending issues:
\begin{itemize}
	\item [(i)] $\mathcal{RD}_{\mathrm{in}}$ is intractable due to the arbitrariness of the choice of $(U_1, \dots, U_M)$ and $\hat{Y}(\cdot)$;
	\item [(ii)] The rates in the rate-distortion tuples are chosen to satisfy the bit constraints of the FL uplinks.
\end{itemize}
In what follows, we first give a tractable inner region of $\mathcal{RD}_{\mathrm{in}}$ in Subsection \ref{A-Tractable-Inner-Region}, and then formulate an aggregation distortion minimization problem with link budget constraints in Subsection \ref{Optimization_problem_formulation}.

\subsubsection{A tractable inner region of $\mathcal{RD}_{\mathrm{in}}$}
\label{A-Tractable-Inner-Region}

We give a tractable inner region of $\mathcal{RD}_{\mathrm{in}}$ by picking $M$ parameterized auxiliary random variables and a specific reconstruction function for $\mathcal{RD}_{\mathrm{in}}$. For Gaussian equivalent sources $\{X_m\}^M_{m=1}$, setting the auxiliary variables to be Gaussian is a common choice \cite{El2011Network, Wagner2008Rate, Krithivasan2009Lattices}. Specifically, we define mutually independent random variables $\{V_m \sim \mathcal{N}(0, q_m)\}_{m=1}^M$ independent of $\{X_m\}_{m=1}^M$. Then we set the auxiliary random variable
\begin{equation}\label{U}
U_m = X_m + V_m, \, m \in [M].
\end{equation}
Since the quadratic distortion measure is considered, we set the reconstruction function as the minimum mean squared error (MMSE) estimator, i.e.,
\begin{equation}\label{Y_hat}
\hat{Y}(\mathbf{u}) = \mathbb{E}\left[Y \mid \mathbf{u}\right] = \mathbf{c}^\top \boldsymbol{\Sigma}_X (\boldsymbol{\Sigma}_X + \boldsymbol{\Sigma}_V)^{-1}\mathbf{u},
\end{equation}
where $\mathbf{c} \triangleq [c_1, \dots, c_M]^{\top}$ and $\boldsymbol{\Sigma}_{V} \triangleq \mathrm{diag}(\mathbf{q})$, $\mathbf{q} \triangleq [q_1, \dots, q_M]^{\top}$. Then, we derive closed form expressions of $I\left(\mathbf{x}^{\mathcal{S}};\, \mathbf{u}^\mathcal{S} \mid \mathbf{u}^{\mathcal{S}^c}\right)$, $I\left(\mathbf{x};\, \mathbf{u}\right)$ and $\mathbb{E}[(Y- \hat{Y}(\mathbf{u}))^2]$ to obtain an inner region of $\mathcal{RD}_{\textrm{in}}$:
\begin{align} 
\label{Guassian-Berger–Tung-inner-region-linear-function}
\mathcal{RD}_{\mathrm{in}}^{\mathrm{G}} = \bigcup\limits_{\mathbf{q} \in \mathbb{R}^{M}_+} \bigg\{
&(R_1, \dots, R_M, D):\ \sum_{m\in\mathcal{S}}R_m \geq I\left(\mathbf{x}^{\mathcal{S}};\, \mathbf{u}^\mathcal{S} \mid \mathbf{u}^{\mathcal{S}^c}\right),\ \forall \mathcal{S} \subset [M],\, \mathcal{S} \neq \O;\nonumber \\
&\sum_{m\in[M]}R_m \geq I\left(\mathbf{x};\, \mathbf{u}\right);
\ D \geq v(\mathbf{q})\bigg\},
\end{align}
where 
\begin{gather}
\setlength{\abovedisplayskip}{3pt}
\label{mutual-information-1}
I\left(\mathbf{x}^{\mathcal{S}};\, \mathbf{u}^\mathcal{S} \mid \mathbf{u}^{\mathcal{S}^c}\right) = \frac{1}{2}\log\left(\frac{\det\left(\boldsymbol{\Sigma}_{X} + \boldsymbol{\Sigma}_{V}\right)}{\det\left(\boldsymbol{\Sigma}_{X}^{\mathcal{S}^c} + \boldsymbol{\Sigma}_{V}^{\mathcal{S}^c}\right) \det\left(\boldsymbol{\Sigma}_{V}^{\mathcal{S}}\right)}\right),
\forall \mathcal{S} \subset [M],\, \mathcal{S} \neq \O, \\
\label{mutual-information-2}
I\left(\mathbf{x};\, \mathbf{u}\right) = \frac{1}{2}\log\left(\frac{\det\left(\boldsymbol{\Sigma}_{X} + \boldsymbol{\Sigma}_{V}\right)}{\det\left(\boldsymbol{\Sigma}_{V}\right)}\right),\\
\label{eq::form_of_v}
v(\mathbf{q})
\triangleq \mathbf{c}^\top \boldsymbol{\Sigma}_{X} \mathbf{c} - \mathbf{c}^\top\boldsymbol{\Sigma}_{X} \left(\boldsymbol{\Sigma}_{X} + \boldsymbol{\Sigma}_{V}\right)^{-1} \boldsymbol{\Sigma}_{X}^\top\mathbf{c}.
\end{gather}
We emphasize that $\mathcal{RD}_{\mathrm{in}}^{\mathrm{G}} \subset \mathcal{RD}_{\textrm{in}}$ for general $\boldsymbol{\Sigma}_{X}$ and $\mathbf{c}$. Since different choices of $\mathbf{q}$ lead to different codebooks of MBTC, we term $\mathbf{q}$ \emph{MBTC parameters}.

\subsubsection{Optimization problem formulation}
\label{Optimization_problem_formulation}
We now formulate an optimization problem to search the rate-distortion tuple in $\mathcal{RD}_{\mathrm{in}}^{\mathrm{G}}$ with minimum distortion $D$.
From \eqref{Guassian-Berger–Tung-inner-region-linear-function}, for given MBTC parameters, the minimum distortion is given by $v(\mathbf{q})$. Thus the problem reduces to finding a minimal $v(\mathbf{q})$ in $\mathcal{RD}_{\mathrm{in}}^{\mathrm{G}}$ by tuning the MBTC parameters $\mathbf{q}$.

As mentioned before, $\mathbf{q}$ cannot be arbitrarily chosen due to the link bit constraints. 
Let $r_m^{tot}$ denote the maximum number of total bits that device $m$ can transmit to the PS. To ensure reliable uplink transmission, the (source coding) rates in (\ref{Guassian-Berger–Tung-inner-region-linear-function}) need to satisfy
\begin{equation}\label{channel-capacity-limit-new}
R_m \leq r_m, \ \forall m\in [M],
\end{equation}
where $r_m \triangleq r_m^{tot} / N$. Combining \eqref{Guassian-Berger–Tung-inner-region-linear-function}, \eqref{eq::form_of_v}, \eqref{channel-capacity-limit-new} and the above discussion, the distortion minimization problem is formulated as
\begin{subequations}
	\setlength{\abovedisplayskip}{3pt}
	\label{original-optimization-problem}
	\begin{align}
	\label{eq::MBTC_opt_objective}
	\max_{\mathbf{q}} \quad &\mathbf{c}^\top\boldsymbol{\Sigma}_{X} \left(\boldsymbol{\Sigma}_{X} + \boldsymbol{\Sigma}_{V}\right)^{-1} \boldsymbol{\Sigma}_{X}^\top\mathbf{c} \\
	\label{eq::MBTC_opt_cond1}
	\mathrm{s.t.} \quad
	&I\left(\mathbf{x}^{\mathcal{S}};\, \mathbf{u}^\mathcal{S} \mid \mathbf{u}^{\mathcal{S}^c}\right) \leq \sum_{m\in\mathcal{S}}r_m, \ \forall \mathcal{S} \subset [M],\ \mathcal{S} \neq \O, \\
	\label{eq::MBTC_opt_cond2}
	&I\left(\mathbf{x};\, \mathbf{u}\right) \leq \sum_{m\in[M]}r_m. 
	\end{align}
\end{subequations}



	\subsection{MBTC Optimization Algorithm}
\label{Optimization-Algorithm-for-the-General-Case}

In this subsection, we propose an iterative algorithm based on majorization-minimization (MM) to solve problem \eqref{original-optimization-problem}. Our algorithm starts with a feasible point called the \emph{current-point}. Each iteration round consists of two steps. In the first step, we construct a \emph{surrogate problem}, whose objective serves as a lower bound of the original objective with equality holds at the current-point. 
Besides, the feasible region of the surrogate problem should be a subset of the original feasible region and contains the current-point.
In the second step, we solve the surrogate problem, and the solution will be used as the current-point in the next iteration.

Before proceeding, we present two lemmas for constructing the surrogate problem. Specifically, Lemma \ref{objective-function-lower-bound} helps find a lower bound of the original objective \eqref{eq::MBTC_opt_objective}, and Lemma \ref{mutual-information-upper-bound} helps find a subset of the original feasible region.

\begin{lemma}[{\cite[Theorem 2]{Shen2018Fractional}}] \label{objective-function-lower-bound}
	For any $\mathbf{a}, \mathbf{b} \in \mathbb{R}^{M}$ and positive definite matrix $\mathbf{B} \in \mathbb{R}^{M \times M}$,
	\begin{equation}
	\mathbf{a}^{\top}\mathbf{B}^{-1}\mathbf{a} \geq 2\mathbf{a}^{\top}\mathbf{b} - \mathbf{b}^{\top}\mathbf{B}\mathbf{b},
	\end{equation}
	where the equality holds when $\mathbf{a}$, $\mathbf{b}$ and $\mathbf{B}$ satisfy $\mathbf{b} = \mathbf{B}^{-1}\mathbf{a}$.
\end{lemma}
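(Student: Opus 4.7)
The plan is to prove this inequality via a standard completing-the-square argument, leveraging the positive definiteness of $\mathbf{B}$. Since the statement is a classical quadratic inequality, the proof is short; the strategy is to exhibit a non-negative quadratic form whose expansion rearranges to the claim and whose zero-set pins down the equality condition.

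First, I would introduce the auxiliary quadratic form $(\mathbf{a}-\mathbf{B}\mathbf{b})^{\top}\mathbf{B}^{-1}(\mathbf{a}-\mathbf{B}\mathbf{b})$. Because $\mathbf{B}$ is symmetric positive definite, so is $\mathbf{B}^{-1}$, and therefore this form is non-negative for every $\mathbf{a},\mathbf{b}\in\mathbb{R}^M$. Next, I would expand the product: using $\mathbf{B}^{-1}\mathbf{B}=\mathbf{I}$ and the symmetry of $\mathbf{B}$, the two cross terms each simplify to $-\mathbf{a}^{\top}\mathbf{b}$, while the trailing term reduces to $\mathbf{b}^{\top}\mathbf{B}\mathbf{b}$. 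The expansion thus reads $\mathbf{a}^{\top}\mathbf{B}^{-1}\mathbf{a} - 2\mathbf{a}^{\top}\mathbf{b} + \mathbf{b}^{\top}\mathbf{B}\mathbf{b} \geq 0$, which rearranges to exactly the stated inequality.

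For the equality condition, since $\mathbf{B}^{-1}$ is strictly positive definite, the auxiliary form vanishes if and only if $\mathbf{a}-\mathbf{B}\mathbf{b}=\mathbf{0}$, i.e., $\mathbf{b}=\mathbf{B}^{-1}\mathbf{a}$, which matches the statement of the lemma.

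There is no substantive obstacle here; the only care required is in sign bookkeeping during the expansion and in invoking the symmetry of $\mathbf{B}$. If a slightly more conceptual exposition is preferred, an equivalent route is to treat the right-hand side $f(\mathbf{b}) \triangleq 2\mathbf{a}^{\top}\mathbf{b}-\mathbf{b}^{\top}\mathbf{B}\mathbf{b}$ as a concave quadratic function of $\mathbf{b}$ (concavity follows from $\mathbf{B}\succ\mathbf{0}$), set $\nabla f(\mathbf{b})=2\mathbf{a}-2\mathbf{B}\mathbf{b}=\mathbf{0}$ to identify the unique maximizer $\mathbf{b}^{\star}=\mathbf{B}^{-1}\mathbf{a}$, and compute $f(\mathbf{b}^{\star})=\mathbf{a}^{\top}\mathbf{B}^{-1}\mathbf{a}$; this simultaneously delivers the inequality for all $\mathbf{b}$ and characterizes the equality condition.
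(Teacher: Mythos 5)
Your proof is correct and matches the paper's approach: both hinge on exhibiting a non-negative quadratic form built from $\mathbf{B}^{-1}\mathbf{a}-\mathbf{b}$ and observing that its expansion gives exactly the claimed inequality. In fact, your form $(\mathbf{a}-\mathbf{B}\mathbf{b})^{\top}\mathbf{B}^{-1}(\mathbf{a}-\mathbf{B}\mathbf{b})$ equals the paper's $(\mathbf{b}-\mathbf{B}^{-1}\mathbf{a})^{\top}\mathbf{B}(\mathbf{b}-\mathbf{B}^{-1}\mathbf{a})$ identically, so the two derivations are the same completing-the-square argument written in slightly different notation.
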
 

\begin{proof}[Proof]
	$\mathbf{a}^{\top}\mathbf{B}^{-1}\mathbf{a}
	\geq
	\mathbf{a}^{\top}\mathbf{B}^{-1}\mathbf{a}
	-
	(\mathbf{b} - \mathbf{B}^{-1}\mathbf{a})^{\top}\mathbf{B}(\mathbf{b} - \mathbf{B}^{-1}\mathbf{a})
	=
	2\mathbf{a}^{\top}\mathbf{b} - \mathbf{b}^{\top}\mathbf{B}\mathbf{b}$,
	where the equality holds when $\mathbf{b} = \mathbf{B}^{-1}\mathbf{a}$.
\end{proof}

\begin{lemma} \label{mutual-information-upper-bound}
	
	Let $\mathbf{x} \sim \mathcal{N}(\mathbf{0}, \boldsymbol{\Sigma}_{X})$ and $\mathbf{v} \sim \mathcal{N}(\mathbf{0}, \boldsymbol{\Sigma}_{V})$ with diagonal $\boldsymbol{\Sigma}_{V}$. Denote $\mathbf{u} = \mathbf{x} + \mathbf{v} \in \mathbb{R}^M$. Given a nonempty $\mathcal{S} \subset [M]$, for any $\mathbf{E} \in \mathbb{R}^{|\mathcal{S}| \times |\mathcal{S}^c|}$ and $\mathbf{F} \in \mathbb{R}^{|\mathcal{S}| \times |\mathcal{S}|}$ with $\mathbf{F} \succ 0$, we have
	\begin{equation}\label{mutual-information-upper-bound-one}
	I\left(\mathbf{x}^{\mathcal{S}};\, \mathbf{u}^\mathcal{S} \mid \mathbf{u}^{\mathcal{S}^c}\right)
	\leq \chi_{\mathcal{S}}(\mathbf{E}, \mathbf{F}, \boldsymbol{\Sigma}_{V})
	+ \xi_{\mathcal{S}}(\mathbf{E}, \mathbf{F}),
	\end{equation}
	where
	\begin{gather}
	\setlength{\abovedisplayskip}{3pt}
	\label{chi-S-one}
	\chi_{\mathcal{S}}(\mathbf{E}, \mathbf{F}, \boldsymbol{\Sigma}_{V}) = \frac{\log(e)}{2}  
	\mathrm{tr}\left\{\mathbf{F}^{-1} \boldsymbol{\Sigma}_{V}^{\mathcal{S}}\right\}
	+\frac{\log(e)}{2}  
	\mathrm{tr}\left\{\mathbf{E}^{\top}\mathbf{F}^{-1}\mathbf{E} \boldsymbol{\Sigma}_{V}^{\mathcal{S}^c}\right\} 
	-
	\frac{1}{2}\log\left(\mathrm{det}(\boldsymbol{\Sigma}_{V}^{\mathcal{S}})\right),\\
	\label{xi-S-one}
	\xi_{\mathcal{S}}(\mathbf{E}, \mathbf{F}) \!=\! \frac{1}{2}\log\left(\mathrm{det}(\mathbf{F})\right)
	\!+\!
	\frac{\log(e)}{2} \mathrm{tr}\left\{\mathbf{F}^{-1}\left( \boldsymbol{\Sigma}_{X}^{\mathcal{S}}
	\!+\!\mathbf{E}\boldsymbol{\Sigma}_{X}^{\mathcal{S}^c}\mathbf{E}^{\top}
	\!-\!\mathbf{E}\boldsymbol{\Sigma}_{X}^{\mathcal{S}^c, \mathcal{S}}
	\!-\!\boldsymbol{\Sigma}_{X}^{\mathcal{S}, \mathcal{S}^c}\mathbf{E}^{\top}
	\right)\right\} \!-\! \frac{|\mathcal{S}|\log(e)}{2},
	\end{gather}
	and the equality holds when $\mathbf{E} = \boldsymbol{\Sigma}_{X}^{\mathcal{S}, \mathcal{S}^c}(\boldsymbol{\Sigma}_{X}^{\mathcal{S}^c} + \boldsymbol{\Sigma}_{V}^{\mathcal{S}^c})^{-1}$ and $\mathbf{F} = \boldsymbol{\Sigma}_{X}^{\mathcal{S}} + \boldsymbol{\Sigma}_{V}^{\mathcal{S}} - \boldsymbol{\Sigma}_{X}^{\mathcal{S}, \mathcal{S}^c}
	(\boldsymbol{\Sigma}_{X}^{\mathcal{S}^c} + \boldsymbol{\Sigma}_{V}^{\mathcal{S}^c})^{-1}
	\boldsymbol{\Sigma}_{X}^{\mathcal{S}^c, \mathcal{S}}$. Similarly, for any $\mathbf{G} \in \mathbb{R}^{M \times M}$ with $\mathbf{G} \succ 0$, we have
	\begin{equation}\label{mutual-information-upper-bound-two}
	\begin{aligned} 
	I\left(\mathbf{x};\, \mathbf{u}\right)
	\leq
	\chi_{[M]}(\mathbf{G}, \boldsymbol{\Sigma}_{V})
	+ \xi_{[M]}(\mathbf{G}),
	\end{aligned}
	\end{equation}
	where
	\begin{equation}\label{chi-M-one}
	\chi_{[M]}(\mathbf{G}, \boldsymbol{\Sigma}_{V}) = \frac{\log(e)}{2}  
	\mathrm{tr}\left\{\mathbf{G}^{-1} \boldsymbol{\Sigma}_{V}\right\}
	-
	\frac{1}{2}\log\left(\mathrm{det}(\boldsymbol{\Sigma}_{V})\right),
	\end{equation}
	\begin{equation}\label{xi-M-one}
	\xi_{[M]}(\mathbf{G}) = \frac{1}{2}\log\left(\mathrm{det}(\mathbf{G})\right)
	+
	\frac{\log(e)}{2} \mathrm{tr}\left\{\mathbf{G}^{-1} \boldsymbol{\Sigma}_{X}\right\} - \frac{M\log(e)}{2},
	\end{equation}
	and the equality holds when $\mathbf{G} = \boldsymbol{\Sigma}_{X} + \boldsymbol{\Sigma}_{V}$.

\end{lemma}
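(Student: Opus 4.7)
My plan is to reduce both inequalities to two standard tools. First, for the conditional bound, I will relax $h(\mathbf{u}^{\mathcal{S}}\mid\mathbf{u}^{\mathcal{S}^c})$ to $h(\mathbf{u}^{\mathcal{S}} - \mathbf{E}\mathbf{u}^{\mathcal{S}^c})$ via translation invariance of differential entropy and conditioning-reduces-entropy; this introduces the free matrix $\mathbf{E}$ and is tight precisely at the MMSE coefficient. Second, I will invoke the concavity inequality $\log\det(\mathbf{K}) \le \log\det(\mathbf{F}) + \log(e)\,\mathrm{tr}(\mathbf{F}^{-1}\mathbf{K}) - k\log(e)$ for $k\!\times\! k$ positive-definite matrices (tight iff $\mathbf{F}=\mathbf{K}$). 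This introduces $\mathbf{F}$ in the conditional bound and plays the same role as $\mathbf{G}$ in the unconditional bound.

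For \eqref{mutual-information-upper-bound-one}, I begin with the decomposition $I(\mathbf{x}^{\mathcal{S}};\mathbf{u}^{\mathcal{S}}\mid\mathbf{u}^{\mathcal{S}^c}) = h(\mathbf{u}^{\mathcal{S}}\mid\mathbf{u}^{\mathcal{S}^c}) - h(\mathbf{v}^{\mathcal{S}})$, where the second term uses that the diagonal structure of $\boldsymbol{\Sigma}_{V}$ (together with $\mathbf{v}\perp\mathbf{x}$ from \eqref{U}) makes $\mathbf{v}^{\mathcal{S}}$ independent of $(\mathbf{u}^{\mathcal{S}^c},\mathbf{x}^{\mathcal{S}})$. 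Next, the linear-estimator relaxation plus a direct expansion (again using $\boldsymbol{\Sigma}_{V}^{\mathcal{S},\mathcal{S}^c}=\mathbf{0}$) yields the residual covariance $\mathbf{K}(\mathbf{E}) = \boldsymbol{\Sigma}_{V}^{\mathcal{S}} + \mathbf{E}\boldsymbol{\Sigma}_{V}^{\mathcal{S}^c}\mathbf{E}^{\top} + \boldsymbol{\Sigma}_{X}^{\mathcal{S}} + \mathbf{E}\boldsymbol{\Sigma}_{X}^{\mathcal{S}^c}\mathbf{E}^{\top} - \mathbf{E}\boldsymbol{\Sigma}_{X}^{\mathcal{S}^c,\mathcal{S}} - \boldsymbol{\Sigma}_{X}^{\mathcal{S},\mathcal{S}^c}\mathbf{E}^{\top}$. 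Combining the Gaussian-entropy formula with the $\log\det$ concavity bound then gives $h(\mathbf{u}^{\mathcal{S}} - \mathbf{E}\mathbf{u}^{\mathcal{S}^c}) \le \tfrac{|\mathcal{S}|}{2}\log(2\pi e) + \tfrac{1}{2}\log\det(\mathbf{F}) + \tfrac{\log(e)}{2}\mathrm{tr}(\mathbf{F}^{-1}\mathbf{K}(\mathbf{E})) - \tfrac{|\mathcal{S}|\log(e)}{2}$. I split the trace along the decomposition of $\mathbf{K}(\mathbf{E})$, use the cyclic identity $\mathrm{tr}(\mathbf{F}^{-1}\mathbf{E}\boldsymbol{\Sigma}_{V}^{\mathcal{S}^c}\mathbf{E}^{\top}) = \mathrm{tr}(\mathbf{E}^{\top}\mathbf{F}^{-1}\mathbf{E}\boldsymbol{\Sigma}_{V}^{\mathcal{S}^c})$, subtract $h(\mathbf{v}^{\mathcal{S}}) = \tfrac{|\mathcal{S}|}{2}\log(2\pi e) + \tfrac{1}{2}\log\det(\boldsymbol{\Sigma}_{V}^{\mathcal{S}})$, and cancel the $\log(2\pi e)$ terms; the $\boldsymbol{\Sigma}_{V}$-dependent pieces then collect exactly into $\chi_{\mathcal{S}}(\mathbf{E},\mathbf{F},\boldsymbol{\Sigma}_{V})$ and the $\boldsymbol{\Sigma}_{X}$-dependent pieces into $\xi_{\mathcal{S}}(\mathbf{E},\mathbf{F})$.

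For \eqref{mutual-information-upper-bound-two}, no linear-estimator relaxation is needed. I write $I(\mathbf{x};\mathbf{u}) = \tfrac{1}{2}\log\det(\boldsymbol{\Sigma}_{X}+\boldsymbol{\Sigma}_{V}) - \tfrac{1}{2}\log\det(\boldsymbol{\Sigma}_{V})$ directly, apply the $\log\det$ concavity bound to the first term with matrix $\mathbf{G}$, and split $\mathrm{tr}(\mathbf{G}^{-1}(\boldsymbol{\Sigma}_{X}+\boldsymbol{\Sigma}_{V}))$ linearly to obtain $\chi_{[M]}(\mathbf{G},\boldsymbol{\Sigma}_{V}) + \xi_{[M]}(\mathbf{G})$. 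For the equality statements, I will trace both relaxations: the linear-estimator step is tight iff $\mathbf{E}$ equals the MMSE coefficient $\mathbf{E}^{\star} = \boldsymbol{\Sigma}_{X}^{\mathcal{S},\mathcal{S}^c}(\boldsymbol{\Sigma}_{X}^{\mathcal{S}^c}+\boldsymbol{\Sigma}_{V}^{\mathcal{S}^c})^{-1}$, and the $\log\det$ step is tight iff $\mathbf{F}=\mathbf{K}(\mathbf{E})$; substituting $\mathbf{E}^{\star}$ into $\mathbf{K}$ triggers the standard Schur-complement cancellation and reduces $\mathbf{K}(\mathbf{E}^{\star})$ to exactly the value of $\mathbf{F}$ claimed in the lemma. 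In the unconditional case, tightness at $\mathbf{G}=\boldsymbol{\Sigma}_{X}+\boldsymbol{\Sigma}_{V}$ is immediate from the concavity bound.

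The main obstacle is not any individual inequality but the bookkeeping that routes each expanded trace term to the correct function: the $\boldsymbol{\Sigma}_{V}$-dependent pieces (including the cross-term $\mathbf{E}\boldsymbol{\Sigma}_{V}^{\mathcal{S}^c}\mathbf{E}^{\top}$) must land in $\chi_{\mathcal{S}}$ while the $\boldsymbol{\Sigma}_{X}$-dependent pieces must land in $\xi_{\mathcal{S}}$, with the $\tfrac{|\mathcal{S}|\log(e)}{2}$ constant emerging from the concavity inequality rather than from $h(\mathbf{v}^{\mathcal{S}})$. This clean separation hinges on the diagonal structure of $\boldsymbol{\Sigma}_{V}$, which both kills $\boldsymbol{\Sigma}_{V}^{\mathcal{S},\mathcal{S}^c}$ in the residual-covariance expansion and preserves the independence of $\mathbf{v}^{\mathcal{S}}$ from $\mathbf{u}^{\mathcal{S}^c}$ in the initial entropy decomposition; without it, additional cross terms would prevent the split.
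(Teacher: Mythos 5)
Your proof is correct and reaches the same intermediate bound as the paper, but via a different chain of lemmas. The paper's route is a single variational step: it writes $I(\mathbf{x}^{\mathcal{S}};\mathbf{u}^{\mathcal{S}}\mid\mathbf{u}^{\mathcal{S}^c})$ as a cross-entropy term minus $h(\mathbf{u}^{\mathcal{S}}\mid\mathbf{x}^{\mathcal{S}})$ (via the Markov chain $\mathbf{u}^{\mathcal{S}}\leftrightarrow\mathbf{x}^{\mathcal{S}}\leftrightarrow\mathbf{u}^{\mathcal{S}^c}$), bounds the cross-entropy $\mathbb{E}[-\log p(\mathbf{u}^{\mathcal{S}}\mid\mathbf{u}^{\mathcal{S}^c})]\le\mathbb{E}[-\log r(\mathbf{u}^{\mathcal{S}}\mid\mathbf{u}^{\mathcal{S}^c})]$ by non-negativity of KL divergence, and then restricts $r$ to a Gaussian with linear mean $\mathbf{E}\mathbf{u}^{\mathcal{S}^c}$ and covariance $\mathbf{F}$, so that $\mathbf{E}$ and $\mathbf{F}$ enter in one stroke. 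You instead decompose the same conditional entropy $h(\mathbf{u}^{\mathcal{S}}\mid\mathbf{u}^{\mathcal{S}^c})$ in two standard steps: (1) $h(\mathbf{u}^{\mathcal{S}}\mid\mathbf{u}^{\mathcal{S}^c})=h(\mathbf{u}^{\mathcal{S}}-\mathbf{E}\mathbf{u}^{\mathcal{S}^c}\mid\mathbf{u}^{\mathcal{S}^c})\le h(\mathbf{u}^{\mathcal{S}}-\mathbf{E}\mathbf{u}^{\mathcal{S}^c})$, which introduces $\mathbf{E}$ as a linear-estimator coefficient and is tight at the MMSE choice; (2) the concavity (first-order Taylor) bound $\log\det\mathbf{K}\le\log\det\mathbf{F}+\log(e)\,\mathrm{tr}(\mathbf{F}^{-1}\mathbf{K})-|\mathcal{S}|\log(e)$, which introduces $\mathbf{F}$ and is tight at $\mathbf{F}=\mathbf{K}$. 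If one unpacks the paper's Gaussian-cross-entropy integral, the two chains produce an identical intermediate expression, so the bookkeeping and tightness conditions coincide; what your route buys is modularity (two off-the-shelf inequalities whose tightness conditions are transparent, cleanly factoring $\mathbf{E}$ from $\mathbf{F}$), while the paper's route is more compact and shows directly that the bound is a restriction of a fully general variational upper bound over all proposal densities $r$, not just Gaussian ones. Your identification of the diagonal structure of $\boldsymbol{\Sigma}_V$ as the reason both the entropy decomposition $I=h(\mathbf{u}^{\mathcal{S}}\mid\mathbf{u}^{\mathcal{S}^c})-h(\mathbf{v}^{\mathcal{S}})$ holds and the cross-block $\boldsymbol{\Sigma}_V^{\mathcal{S},\mathcal{S}^c}$ vanishes in $\mathbf{K}(\mathbf{E})$ is accurate and is implicit in the paper's step $(a)$ of its Eq.~\eqref{mutual-information-upper-bound-derivation-process-2-plus}.
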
 

\begin{proof}[Proof]
	See Appendix \ref{proof-mutual-information-upper-bound}.
\end{proof}

Recall that $\boldsymbol{\Sigma}_{V} = \mathrm{diag}(\mathbf{q})$. Given a feasible point $\boldsymbol{\Sigma}_{V} = \widehat{\boldsymbol{\Sigma}}_{V}$, we construct a problem as
\begin{subequations} \label{approximation-optimization-problem-two}
	\begin{align}
	\max_{\boldsymbol{\Sigma}_V} \quad 
	&2\mathbf{c}^\top\boldsymbol{\Sigma}_{X} \mathbf{b} - \mathbf{b}^{\top}(\boldsymbol{\Sigma}_X + \boldsymbol{\Sigma}_{V})\mathbf{b} \label{surrogate-objective}\\
	\mathrm{s.t.} \quad
	&\chi_{\mathcal{S}}(\mathbf{E}_{\mathcal{S}}, \mathbf{F}_{\mathcal{S}}, \boldsymbol{\Sigma}_{V})
	+ \xi_{\mathcal{S}}(\mathbf{E}_{\mathcal{S}}, \mathbf{F}_{\mathcal{S}})
	\leq \sum_{m\in\mathcal{S}}r_m,\,\forall \mathcal{S} \subset [M],\, \mathcal{S} \neq \O, \label{surrogate-constraint1}\\
	&\chi_{[M]}(\mathbf{G}, \boldsymbol{\Sigma}_{V})
	+ \xi_{[M]}(\mathbf{G})
	\leq \sum_{m\in[M]}r_m,\label{surrogate-constraint2}
	\end{align}
\end{subequations}
where $\mathbf{b} = (\boldsymbol{\Sigma}_{X} + \widehat{\boldsymbol{\Sigma}}_{V})^{-1}\boldsymbol{\Sigma}_{X}\mathbf{c}$, $\mathbf{E}_{\mathcal{S}} = \boldsymbol{\Sigma}_{X}^{\mathcal{S}, \mathcal{S}^c}(\boldsymbol{\Sigma}_{X}^{\mathcal{S}^c} + \widehat{\boldsymbol{\Sigma}}_{V}^{\mathcal{S}^c})^{-1}$ and $\mathbf{F}_{\mathcal{S}} = \boldsymbol{\Sigma}_{X}^{\mathcal{S}} + \widehat{\boldsymbol{\Sigma}}_{V}^{\mathcal{S}} - \boldsymbol{\Sigma}_{X}^{\mathcal{S}, \mathcal{S}^c}
(\boldsymbol{\Sigma}_{X}^{\mathcal{S}^c} + \widehat{\boldsymbol{\Sigma}}_{V}^{\mathcal{S}^c})^{-1}
\boldsymbol{\Sigma}_{X}^{\mathcal{S}^c, \mathcal{S}}$ for all nonempty set $\mathcal{S}\subset [M]$, and $\mathbf{G} = \boldsymbol{\Sigma}_{X} + \widehat{\boldsymbol{\Sigma}}_{V}$. According to Lemma \ref{objective-function-lower-bound}, (\ref{surrogate-objective}) is a lower bound of \eqref{eq::MBTC_opt_objective} with the equality holds when $\boldsymbol{\Sigma}_{V} = \widehat{\boldsymbol{\Sigma}}_{V}$. From Lemma \ref{mutual-information-upper-bound}, the feasible region of \eqref{original-optimization-problem} contains the feasible region of (\ref{approximation-optimization-problem-two}) and both of them contain the point $\boldsymbol{\Sigma}_{V} = \widehat{\boldsymbol{\Sigma}}_{V}$. Thus \eqref{approximation-optimization-problem-two} is a surrogate problem of (\ref{original-optimization-problem}) and point $\boldsymbol{\Sigma}_{V} = \widehat{\boldsymbol{\Sigma}}_{V}$ is the current-point. 

It is not difficult to verify that (\ref{surrogate-objective}) is a linear function of $\mathbf{q}$, and both $\chi_{\mathcal{S}}(\mathbf{E}_{\mathcal{S}}, \mathbf{F}_{\mathcal{S}}, \boldsymbol{\Sigma}_{V})$ and $\chi_{[M]}(\mathbf{G}, \boldsymbol{\Sigma}_{V})$ are convex functions of $\mathbf{q}$. Thus the surrogate problem (\ref{approximation-optimization-problem-two}) is convex and can be solved optimally with existing convex optimization solvers such as CVXPY \cite{Diamond2016CVXPY}. By repeatedly constructing and solving this surrogate problem following the MM framework introduced before, we can finally obtain a suboptimal solution to the original problem (\ref{original-optimization-problem}). We summarize the proposed MM-based algorithm as Algorithm \ref{proposed-algorithm}. This algorithm converges since the objective value of the original problem (\ref{original-optimization-problem}) monotonically non-decreasing in the iterative process. 

Note that problem (\ref{approximation-optimization-problem-two}) has $2^M - 1$ constraints (inherited from problem (\ref{original-optimization-problem})), which increases exponentially with the device number $M$. Thus, when considering FL systems with a relatively large number of devices, Algorithm \ref{proposed-algorithm} becomes computationally prohibitive. In the next section, under some symmetry assumptions, we show that problem (\ref{original-optimization-problem}) can be reformulated into a form with much fewer constraints, allowing the development of more efficient algorithms.


\begin{algorithm}[h] 
	\caption{MBTC Optimization Algorithm}
	\label{proposed-algorithm}
	\begin{algorithmic}[1]
		\Require
		$\boldsymbol{\Sigma}_{X}$, $\mathbf{c}$, $\{r_m\}_{m=1}^M$.
		\Ensure
		solution $\mathbf{q}^{*}$.
		\State Initialize $\mathbf{q}^{(0)}$ to a feasible point of problem (\ref{original-optimization-problem}), let $\boldsymbol{\Sigma}_{V}^{(0)} = \mathrm{diag}(\mathbf{q}^{(0)})$, initialize iteration number $i=0$ and threshold $\epsilon > 0$.
		\Repeat
		\State $\widehat{\boldsymbol{\Sigma}}_{V} = \boldsymbol{\Sigma}_{V}^{(i)}$.
		\State Solve convex problem (\ref{approximation-optimization-problem-two}) to obtain $\boldsymbol{\Sigma}_{V}^{(i+1)}$.
		\State Update $i=i+1$.
		\Until{the fractional increase of the objective value of problem (\ref{original-optimization-problem}) is below the threshold $\epsilon$ or the maximum number of iterations is reached.}
		\State Set $[\mathbf{q}^*]_m = [\boldsymbol{\Sigma}_{V}^{(i)}]_{m,m}$, $\forall m \in [M]$.
	\end{algorithmic}
\end{algorithm}

	\section{Aggregation Distortion Minimization Under Symmetric Assumptions}
	\label{MBTC-Parameters-Optimization-with-Symmetric-Assumptions}
	In this section, we recast problem (\ref{original-optimization-problem}) into a form with far fewer constraints under certain symmetry assumptions, and then develop an optimization algorithm to solve it.
	
	\subsection{Problem Formulation under Symmetry Assumptions}
	The discussion in this section is based on the following three symmetry assumptions.
	\begin{assumption} \label{symmetry-assumption}
		The FL system is symmetric in the following senses:
		\begin{enumerate}[(i)]
			\item Symmetry of sources: The covariance matrix $\boldsymbol{\Sigma}_{X} = \rho \sigma_{X}^2 \mathbf{1} \mathbf{1}^{\top} + (\sigma_{X}^2 - \rho \sigma_{X}^2)\mathbf{I}$; \label{symmetry-assumption1}
			\item Symmetry of target coefficients: $\mathbf{c} = \lambda \cdot \mathbf{1}$ with $\lambda \neq 0$; \label{symmetry-assumption2}
			\item Symmetry of bit-constraints: All devices are divided into $J$ groups $\{\mathcal{G}_j\subset [M]\}^J_{j=1}$, and the devices in each group have the same bit-constraint, i.e., $r_m = r_{(j)}$, $\forall m \in \mathcal{G}_j$, $j \in [J]$. \label{symmetry-assumption3}
		\end{enumerate}
	\end{assumption}
	
	We first justify these assumptions. When data is i.i.d. among the devices and all devices have similar sample sizes, Assumption \ref{symmetry-assumption}-(\ref{symmetry-assumption1}) approximately holds, as shown in Fig. $4$ of \cite{Zhong2021Over}. Assumption \ref{symmetry-assumption}-(\ref{symmetry-assumption2}) can be satisfied by simply adjusting the aggregation target function.\footnote{For instance, when considering the function $\kappa(\mathbf{g}_1^N,\dots,\mathbf{g}_M^N) = \sum_{m=1}^M K_m\mathbf{g}_m^N/K$, Assumption \ref{symmetry-assumption}-(\ref{symmetry-assumption2}) can be satisfied by adjusting the sample size $K_m$ of every device to be the same.}
	Assumption \ref{symmetry-assumption}-(\ref{symmetry-assumption3}) is also not difficult to satisfy since devices with loose bit-constraints can accommodate devices with tighter bit-constraints by reducing the number of transmitted bits. 
	
	We next recast problem (\ref{original-optimization-problem}) under Assumption \ref{symmetry-assumption}. It can be verified that under Assumption \ref{symmetry-assumption}, problem \eqref{original-optimization-problem} is symmetric with respect to the optimization variables in the same group (defined in Assumption \ref{symmetry-assumption}-\eqref{symmetry-assumption3}).
	Thus the optimal solution of problem (\ref{original-optimization-problem}), $\{q_m^{\mathrm{opt}}\}_{m=1}^M$, satisfies $q^\mathrm{opt}_m = q_{(j)}^\mathrm{opt}$, $\forall m \in \mathcal{G}_j, \, j \in [J]$. Thus we can solve (\ref{original-optimization-problem}) by solving
	\begin{subequations} \label{original-optimization-problem-three}
		\begin{align}
		\max_{\{q_{(j)}\}_{j=1}^J} \quad 
		&\mathbf{c}^\top\boldsymbol{\Sigma}_X \left(\boldsymbol{\Sigma}_X + \boldsymbol{\Sigma}_V\right)^{-1} \boldsymbol{\Sigma}_X^\top\mathbf{c}\\
		\mathrm{s.t.} \quad
		&\text{\eqref{eq::MBTC_opt_cond1} and \eqref{eq::MBTC_opt_cond2} hold}, \label{original-optimization-problem-three-c_b}\\
		&q_m = q_{(j)}, \, \forall m \in \mathcal{G}_j, \, j \in [J]. \label{original-optimization-problem-three-c_d}
		\end{align}
	\end{subequations}
	Denote $M_j \triangleq |\mathcal{G}_j|$, $j \in [J]$. From Appendix \ref{Problem-Form-Transformation}, we can recast problem (\ref{original-optimization-problem-three}) as
	\begin{subequations} \label{original-optimization-problem-four}
		\begin{align}
		\max_{\{q_{(j)}\}_{j=1}^J} \quad 
		&\sum_{j=1}^J \frac{M_j}{q_{(j)} + (1-\rho) \sigma_{X}^2} \label{original-optimization-problem-four-obj}\\
		\mathrm{s.t.} \quad
		&\vartheta\left(\{q_{(j)}\}_{j=1}^J; \{\varsigma_j\}_{j=1}^J\right) \leq \sum_{j=1}^J \varsigma_j r_{(j)},\, \forall \varsigma_j \in \{0\}\cup [M_j], j \in [J], \prod_{j=1}^J \varsigma_j \neq 0,\label{original-optimization-problem-four-constraints}
		\end{align}
	\end{subequations}
	where 
	\begin{equation}\label{a_q}
	\begin{aligned} 
	\vartheta\left(\{q_{(j)}\}_{j=1}^J; \{\varsigma_j\}_{j=1}^J\right)
	\triangleq &\frac{1}{2} \sum_{j=1}^J \varsigma_j\log\left(1 \!+\! \frac{(1\!-\!\rho)\sigma_{X}^2}{q_{(j)}}\right)
	\!+\! \frac{1}{2} \log\left(1 \!+\! \sum_{j=1}^J \frac{M_j \rho \sigma_{X}^2}{(1\!-\!\rho)\sigma_{X}^2 + q_{(j)}}\right)\\
	&- \frac{1}{2} \log\left(1 + \sum_{j=1}^J \frac{(M_j - \varsigma_j) \rho \sigma_{X}^2}{(1-\rho)\sigma_{X}^2 + q_{(j)}}\right).
	\end{aligned}
	\end{equation}
	The solution $\{q_{(j)}^*\}_{j=1}^J$ of \eqref{original-optimization-problem-four} gives rise to a solution of \eqref{original-optimization-problem} i.e., $q^{*}_m = q_{(j)}^*$, $\forall m \in \mathcal{G}_j$, $j \in [J]$.
	
	\subsection{MBTC Optimization Algorithm Under Symmetry Assumptions}
	We now develop an MM-based algorithm to solve problem (\ref{original-optimization-problem-four}). Specifically, to construct a surrogate problem, we need to find a lower bound of the objective (\ref{original-optimization-problem-four-obj}) and an upper bound of function $\vartheta$. 
	Note that functions $\log(1 + a / x)$, $a \geq 0$, $x \textgreater 0$ and $\log(1 + \sum_{j=1}^J a_j / (b + x_j))$, $b \geq 0$, $a_j \geq 0$, $x_j \textgreater 0$, $\forall j \in [J]$, are both convex. Thus an appropriate upper bound of function $\vartheta$ can be obtained by expanding its third term as its first-order Taylor polynomial. Specifically, given a feasible point $\{\hat{q}_{(j)}\}_{j=1}^J$, we have
	\begin{subequations}
		\begin{equation}\label{lower-bound}
		\begin{aligned} 
		\sum_{j=1}^J \frac{M_j}{q_{(j)} + (1-\rho) \sigma_{X}^2} \geq 
		\sum_{j=1}^J \frac{M_j}{\hat{q}_{(j)} + (1-\rho) \sigma_{X}^2}
		- \sum_{j=1}^J \frac{M_j}{\left(\hat{q}_{(j)} + (1-\rho) \sigma_{X}^2\right)^2}\left(q_{(j)} - \hat{q}_{(j)}\right),
		\end{aligned}
		\end{equation}
		and $\vartheta\left(\{q_{(j)}\}_{j=1}^J; \{\varsigma_j\}_{j=1}^J\right)$ is upper-bounded by
		\begin{multline}\label{upper-bound}
		\vartheta^{\text{up}}\left(\{q_{(j)}\}_{j=1}^J; \{\varsigma_j\}_{j=1}^J\right) \triangleq 
		\frac{1}{2} \sum_{j=1}^J \varsigma_j\log\left(1 + \frac{(1-\rho)\sigma_{X}^2}{q_{(j)}}\right)\\
		+ \frac{1}{2} \log\left(1 + \sum_{j=1}^J \frac{M_j \rho \sigma_{X}^2}{(1-\rho)\sigma_{X}^2 + q_{(j)}}\right)
		-\frac{1}{2} \log\left(1 + \sum_{j=1}^J \frac{(M_j - \varsigma_j) \rho \sigma_{X}^2}{(1-\rho)\sigma_{X}^2 + \hat{q}_{(j)}}\right)\\
		+\frac{1}{2} \frac{\log(e)}{\left(1 + \sum_{i=1}^J \frac{(M_i - \varsigma_i) \rho \sigma_{X}^2}{(1-\rho)\sigma_{X}^2 + \hat{q}_{(i)}}\right)}
		\sum_{j=1}^J \frac{(M_j - \varsigma_j) \rho \sigma_{X}^2}{\left((1-\rho)\sigma_{X}^2 + \hat{q}_{(j)}\right)^2} \left(q_{(j)} - \hat{q}_{(j)}\right),
		\end{multline}
	\end{subequations}
	where both the equalities in \eqref{lower-bound} and \eqref{upper-bound} hold when $q_{(j)} = \hat{q}_{(j)}$, $\forall j \in [J]$. Further, note that the maximization of the right-hand side with respect to $\{q_{(j)}\}^J_{j=1}$ of \eqref{lower-bound} is equivalent to the minimization of the term $\sum_{j=1}^J M_jq_{(j)}/(\hat{q}_{(j)} + (1-\rho) \sigma_{X}^2)^2$. Thus, a surrogate problem is given by
	\begin{subequations} \label{approximation-optimization-problem-three}
		\begin{align}
		\min_{\{q_{(j)}\}_{j=1}^J} \quad 
		&\sum_{j=1}^J \frac{M_j q_{(j)}}{\left(\hat{q}_{(j)} + (1-\rho) \sigma_{X}^2\right)^2}\\
		\mathrm{s.t.} \quad
		&\vartheta^{\text{up}}\left(\{q_{(j)}\}_{j=1}^J; \{\varsigma_j\}_{j=1}^J\right) \leq \sum_{j=1}^J \varsigma_j r_{(j)}, ~ \forall \varsigma_j \in \{0\}\cup [M_j],~\forall j \in [J], ~\prod_{j=1}^J \varsigma_j \neq 0.
		\end{align}
	\end{subequations}
	
	Since $\vartheta^{\text{up}}\left(\{q_{(j)}\}_{j=1}^J; \{\varsigma_j\}_{j=1}^J\right)$ is convex, problem (\ref{approximation-optimization-problem-three}) is convex and can be solved optimally by existing convex optimization solvers such as CVXOPT \cite{Vandenberghe2010The}. Again, by repeatedly constructing and solving this surrogate problem, we can finally obtain a suboptimal solution to problem (\ref{original-optimization-problem-four}). We summarize this algorithm as Algorithm \ref{proposed-algorithm-for-grouping-case}. This algorithm converges since the objective value of problem (\ref{original-optimization-problem-four}) is monotonically non-decreasing in the iterative process.

	Problem (\ref{approximation-optimization-problem-three}) has $\prod_{j=1}^J (M_j + 1) - 1$ constraints, with growth rate much slower than that of (\ref{approximation-optimization-problem-two}), which is $2^M - 1$. Thus Algorithm \ref{proposed-algorithm-for-grouping-case} can be used to optimize the MBTC parameters for larger-scale FL systems. Clearly, the choice of group number $J$ gives rise to a trade-off between the computational complexity for solving \eqref{approximation-optimization-problem-three} and the system performance. Specifically, if $J=1$, problem (\ref{approximation-optimization-problem-three}) has only $M$ constraints. This greatly reduces the computational complexity compare with Algorithm \ref{proposed-algorithm}, but may severely sacrifice the aggregation accuracy in order to satisfy Assumption \ref{symmetry-assumption}-(\ref{symmetry-assumption3}). As $J$ approaches $M$, the computational complexity gradually catches up with that of problem \eqref{original-optimization-problem}, while the degradation of aggregation accuracy also diminishes. In practice, we can flexibly determine the value of $J$ as needed.

	\begin{algorithm}[h] 
		\caption{MBTC Optimization Algorithm under Symmetry Assumptions}
		\label{proposed-algorithm-for-grouping-case}
		\begin{algorithmic}[1]
			\Require
			$\rho$, $\sigma_{X}^2$, $\{\mathcal{G}_j\}_{j=1}^J$, $\{r_{(j)}\}_{j=1}^J$.
			\Ensure
			solution $\mathbf{q}^*$.
			\State Set $M_j = |\mathcal{G}_j|$, $\forall j \in [J]$, initialize $\{q_{(j)}^{(0)}\}_{j=1}^J$ to a feasible point of problem (\ref{original-optimization-problem-four}), initialize iteration number $i=0$ and threshold $\epsilon > 0$.
			\Repeat
			\State $\hat{q}_{(j)} = q_{(j)}^{(i)}$, $j\in[J]$.
			\State Solve convex problem (\ref{approximation-optimization-problem-three}) to obtain $\{q_{(j)}^{(i+1)}\}_{j=1}^J$.
			\State Update $i=i+1$.
			\Until{the fractional increase of the objective value of problem (\ref{original-optimization-problem-four}) is below the threshold $\epsilon$ or the maximum number of iterations is reached.}
			\State Set $[\mathbf{q}^*]_m = q_{(j)}^*$, $\forall m \in \mathcal{G}_j, \forall j \in [J]$.
		\end{algorithmic}
	\end{algorithm}

	\section{Numerical Results} \label{Numerical-Results}
	
	In this section, we first introduce the method to numerically evaluate the limits of FL convergence performance, then reveal the gap between the baseline schemes and our theoretical bound in terms of aggregation distortion, convergence performance, and communication cost.

	\subsection{FL Convergence Performance Evaluation}
	\label{FL_Convergence_Performance_Evaluation}
	
	Note that the solution $\mathbf{q}^*$ obtained by solving problem \eqref{original-optimization-problem} not only corresponds to a point in $\mathcal{RD}_{\mathrm{in}}^{\mathrm{G}}$ with small distortion, but also implies a codebook generation method for our achievability scheme. At each iteration round $t$, we can apply our achievability scheme with the $\mathbf{q}^*$-codebook to compute $\hat{\mathbf{g}}^{(t)}$ and then use it to update the global model. In this way, the FL convergence performance limits (in the sense of the performance limits of model aggregation) can be numerically evaluated. In the following, we detail the convergence performance evaluation method.
	
	%
	Recall that the FL training follows the four steps in Subsection \ref{Federated Learning System}. Since the steps (i), (ii), and (iv) are straightforward, we focus on the model aggregation step, i.e., how to compute $\hat{\mathbf{g}}^N$ using $\{\mathbf{g}_m^N\}_{m=1}^M$. Specifically, given the local updates $\{\mathbf{g}_m^N\}_{m=1}^M$ at iteration round $t$, we first compute $\{\mathbf{x}_m^N\}_{m=1}^M$ using \eqref{mean_removal} and \eqref{random-rotation}\footnote{In the simulation, as an approximation to the random rotation step \eqref{random-rotation}, we divide each vector $\tilde{\mathbf{g}}_m^N$ into segments with a length of $1024$ and then generate Haar matrices to multiply the segments (the remaining segment with less than $1024$ elements is multiplied by a Haar matrix with the corresponding dimension).}.
	Then, we follow Section \ref{A-Tractable-Inner-Region} for the encoding and decoding of MBTC.
	Specifically, to calculate the output of the decoder, we first approximate $\boldsymbol{\Sigma}_{X}$ by $[\boldsymbol{\Sigma}_{X}]_{m_1,m_2} \approx (\tilde{\mathbf{g}}_{m_1}^N)^\top \tilde{\mathbf{g}}_{m_2}^N / N$, $\forall m_1,m_2 \in [M]$, then solve problem (\ref{original-optimization-problem}) to obtain the MBTC parameters $\mathbf{q}^*=[q_1^*,\dots,q_M^*]^\top$.
	Denoting $\boldsymbol{\Sigma}_{V}^* \triangleq \mathrm{diag}(\mathbf{q}^*)$, we combine \eqref{U} and \eqref{Y_hat} to calculate the output of the decoder as
	\begin{equation}
	\label{calculate-decoder-output}
	\begin{aligned}
	[\hat{\mathbf{x}}^N]_n = \mathbf{c}^\top \boldsymbol{\Sigma}_X (\boldsymbol{\Sigma}_X + \boldsymbol{\Sigma}_V^*)^{-1}\left[[\mathbf{x}_1^N + \mathbf{v}_1^N]_n,\dots, [\mathbf{x}_M^N + \mathbf{v}_M^N]_n\right]^\top, \, \forall n \in [N],
	\end{aligned}
	\end{equation}
	where $\{\mathbf{v}_m^N \sim \mathcal{N}(\mathbf{0},q_m^*\mathbf{I}_N)\}_{m=1}^M$ are mutually independent and are independent of $\{\mathbf{x}_m^N\}_{m=1}^M$. Note that the vector $\mathbf{x}_m^N + \mathbf{v}_m^N$ in \eqref{calculate-decoder-output} approximates the codeword corresponding to $\mathbf{g}_m^N$, $\forall m \in [M]$, according to the properties of jointly typical sequences. Finally, $\hat{\mathbf{g}}^{N}$ is computed using \eqref{g-hat}.

	\subsection{Aggregation Distortion Comparison}
	\label{Distortion-Comparison}
	
	\begin{figure}[htbp]
		\vspace{-0.1cm}
		\setlength{\belowcaptionskip}{-0.4cm}
		\centering
		\includegraphics[width=0.7\linewidth]{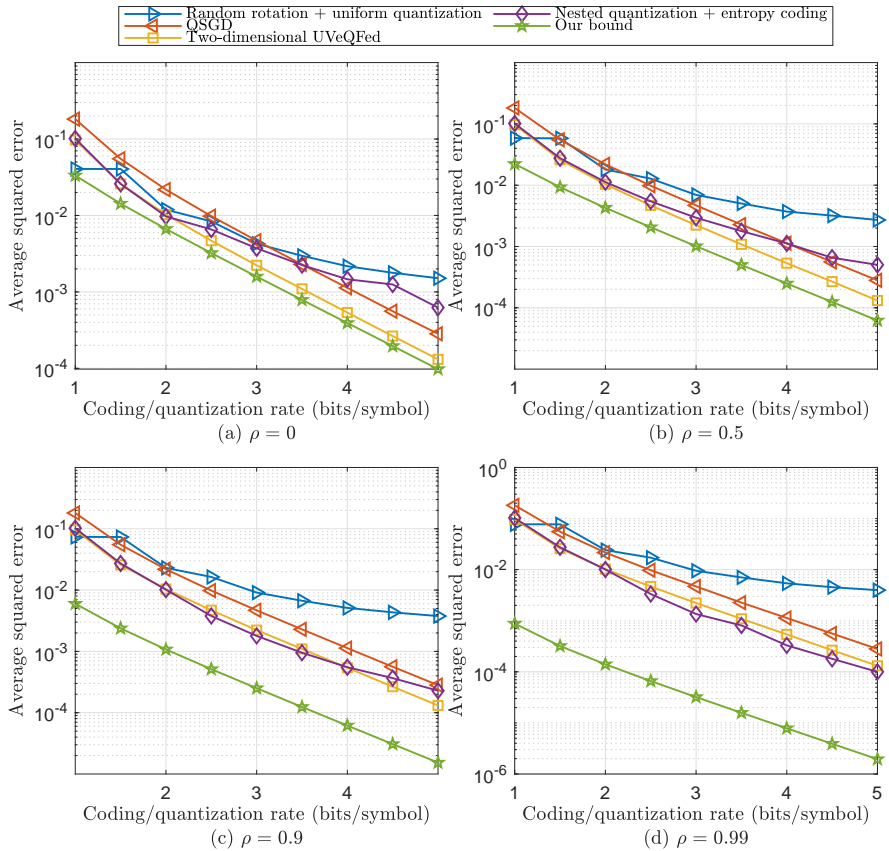}
		\caption{Average squared error versus coding/quantization rate (bits/symbol) with $\rho = 0$, $0.5$, $0.9$, and $0.99$.}
		\label{distortion-compare}
	\end{figure}
	
	
	In this subsection, the baseline schemes are compared (using synthetic data) with our bound in terms of the aggregation distortion with fixed source coding (quantization) rate $R$, i.e., the number of encoded (quantized) bits per source symbol. The baselines include
	QSGD \cite{Alistarh2017QSGD}, uniform quantization with random rotation \cite{Kone2016Federated}, two-dimensional
	UVeQFed \cite{Shlezinger2021UVeQFed}, nested quantization followed by entropy coding\footnote{Since the nested quantization scheme in \cite{Abdi2019Reducing} uses one-dimensional nested lattices to implement the lattice-based Wyner–Ziv coding scheme in \cite{Zamir2002Nested}, it cannot be guaranteed to decode successfully. We slightly improve the nested quantization scheme in \cite{Abdi2019Reducing} in the simulation: when the scheme cannot decode successfully, re-transmit.} \cite{Abdi2019Reducing}.
	

	The synthetic data is generated as follows. Let $\mathbf{s} \in \mathbb{R}^N$ and $\{\mathbf{w}_m \in \mathbb{R}^N\}_{m=1}^{M}$ be mutually independent random vectors with i.i.d. standard Gaussian elements. Given $\rho \in [0, 1]$, the synthetic data to be encoded/quantized is then given by
	$\mathbf{y}_m = \sqrt{\rho} \mathbf{s} + \sqrt{1 - \rho} \mathbf{w}_m$, $\forall m \in [M]$. Note that $\mathbf{y}_m$ also has i.i.d. standard Gaussian elements, $\forall m \in [M]$, and the correlation coefficient of $[\mathbf{y}_{m_1}]_{n}$ and $[\mathbf{y}_{m_2}]_n$ is exactly $\rho$, $\forall n \in [N]$, $m_1 \neq m_2 \in [M]$. Thus we can adjust the correlation between $\{\mathbf{y}_m\}_{m=1}^M$ by adjusting $\rho$. The target is to recover the aggregated vector $\mathbf{y} \triangleq \sum_{m=1}^{M}\mathbf{y}_m/M$. 
	
	As discussed in Section \ref{modified-Berger-Tung-coding}, our MBTC-based achievability scheme allows separate encoding on each $\mathbf{y}_m$ and joint decoding to directly obtain $\hat{\mathbf{y}}$ (an estimation of $\mathbf{y}$). For the baselines, we separately quantize $\mathbf{y}_m$ into $\hat{\mathbf{y}}_m$ using the corresponding quantization schemes with rate-determined quantization resolutions, and then obtain $\hat{\mathbf{y}}$ by 
	\begin{equation} \label{y_hat}
	\hat{\mathbf{y}} = \sum_{m=1}^{M}\hat{\mathbf{y}}_m/M.
	\end{equation}
	The distortion is measured by the \emph{average squared error} $\|\mathbf{y} - \hat{\mathbf{y}}\|^2/N$. In the simulation, we set the number of samples $N = 2^{17}$ and the number of sources $M = 10$.
	
	
	In Fig. \ref{distortion-compare}, we plot the average squared error versus the coding (quantization) rate with different values of $\rho$. As expected, the baselines are far from our bound when the correlation coefficient $\rho$ is large, suggesting that the baselines have great potential for further improvement when the sources (local updates) are strongly correlated. Note that the baselines are consistently worse than our bound, even when $\rho=0$. This is due to the fact that our bound is essentially obtained by infinite-length vector quantization.
	
	
	\subsection{FL Performance Comparison}
	\label{FL-Performance-Comparison}
	

	
	In this subsection, we compare the baselines with our bound in terms of the convergence performance and the communication cost. To this aim, we test our achievability scheme and the baselines by training a convolutional neural network (CNN), illustrated by Fig. \ref{CNN-structure}, in an FL fashion on the MNIST and the Fashion-MNIST datasets. The baselines include BSC \cite{Sattler2019Sparse}, top-k sparsification with residual accumulation \cite{Yujun2018Deep}, DPCM followed by entropy coding\footnote{The scheme ``DPCM followed by entropy coding'' treats each local update as a sample sequence drawn from a source with memory and then encodes the sequence using first-order differential pulse code modulation (DPCM), where the predictor weights are calculated statistically.}, and the baselines considered in the previous subsection\footnote{We do not consider scheme ``Random rotation + uniform quantization'' for comparison since it is inapplicable for the settings of Fig. \ref{general-figs}.}. We uniformly allocate the data of the training dataset to the devices and ensure the data allocated to each device is identically distributed. Within each training round, each device uses its local data to train an epoch of five stochastic gradient descent (SGD) iterations with learning rate $\eta = 0.1$. The local update is obtained by calculating the difference between the model parameters before and after the local training. Besides, we adopt a linear aggregation target function $\kappa(x_1, \dots, x_M) = \sum_{m=1}^{M} K_m x_m/K$. The FL parameters are summarized in TABLE \ref{Federated_learning_paramaters}.
	
	\begin{table}[]
		\centering
		\caption{Federated learning paramaters.}
		\label{Federated_learning_paramaters}
		\begin{tabular}{lcc}
			\hline
			& \multicolumn{1}{l}{Convergence performance comparison} & \multicolumn{1}{l}{Communication cost comparison} \\ \hline
			Machine learning model  & CNN illustrated in Fig. \ref{CNN-structure}            & CNN illustrated in Fig. \ref{CNN-structure}       \\
			Number of devices ($M$) & $8$                                                    & $20$                                              \\
			Learning rate ($\eta$)  & $0.1$                                                  & $0.1$                                             \\
			Local update times      & $5$                                                    & $5$                                               \\
			Local batch size        & $1500$                                                 & $600$                                             \\
			Data distribution       & $iid$                                                  & $iid$                                             \\ \hline
		\end{tabular}
	\end{table}

	\begin{figure}[]
		\vspace{-0.1cm}
		\setlength{\belowcaptionskip}{-0.4cm}
		\centering
		\includegraphics[width=10cm]{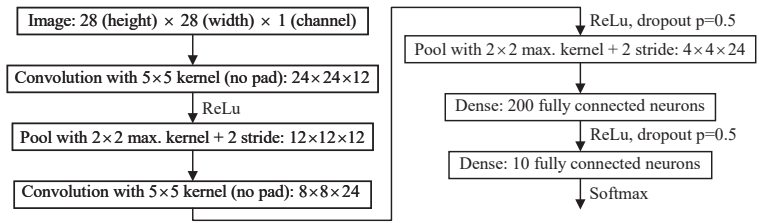}
		\caption{CNN structure.}
		\label{CNN-structure}
	\end{figure}

	\subsubsection{Convergence performance comparison}
	\label{Convergence-performance-compare}
	In this part, we compare the FL convergence performance of different schemes with identical coding (quantization) rate. We consider a wireless FL uplink for simulation. The system consists of one single-antenna PS and $M$ single-antenna devices, where the devices are randomly distributed inside a circle centered on the PS with radius $d$. For the uplink channels, we consider the Rayleigh fading model and the 5G Urban Macro (UMa) pathloss model given by European Telecommunications Standards Institute (ETSI) \cite{European20205G}. The channels are assumed to be unchanged during a communication round. For each device $m$, we assume that its uplink information rate can reach its channel capacity $C_m$, $m \in [M]$. Further, we assume that each device performs $\Upsilon$ times of channel realizations in each communication round. Then the maximum source coding (quantization) rate $r_m$ of device $m$ can be calculated by $r_m = \Upsilon C_m /N$, $\forall m \in [M]$, where $N = 86546$ according to the CNN structure. At each round, each scheme is adjusted to meet the maximum coding (quantization) rate constraints. Specifically, our achievability schemes can naturally meet these maximum rate constraints since they appear as constraints of problem (\ref{original-optimization-problem}).
	As for the baselines, we adjust their quantization resolutions to meet these constraints, and adopt \eqref{y_hat} for aggregation. In the simulation, we set $M = 8$, $d = 1.5$ km, 
	the power of additive white Gaussian noise (AWGN) as $-80$ dBm, the devices' transmitting power as $5$ dBm, the height of the PS antenna as $25$ m, the height of the device antennas as $1.5$ m, and the center frequency as $755$ MHz.
	
	
	
	\begin{figure}[htbp]
		\vspace{-0.1cm}
		\setlength{\belowcaptionskip}{-0.4cm}
		\centering
		\includegraphics[width=0.8\linewidth]{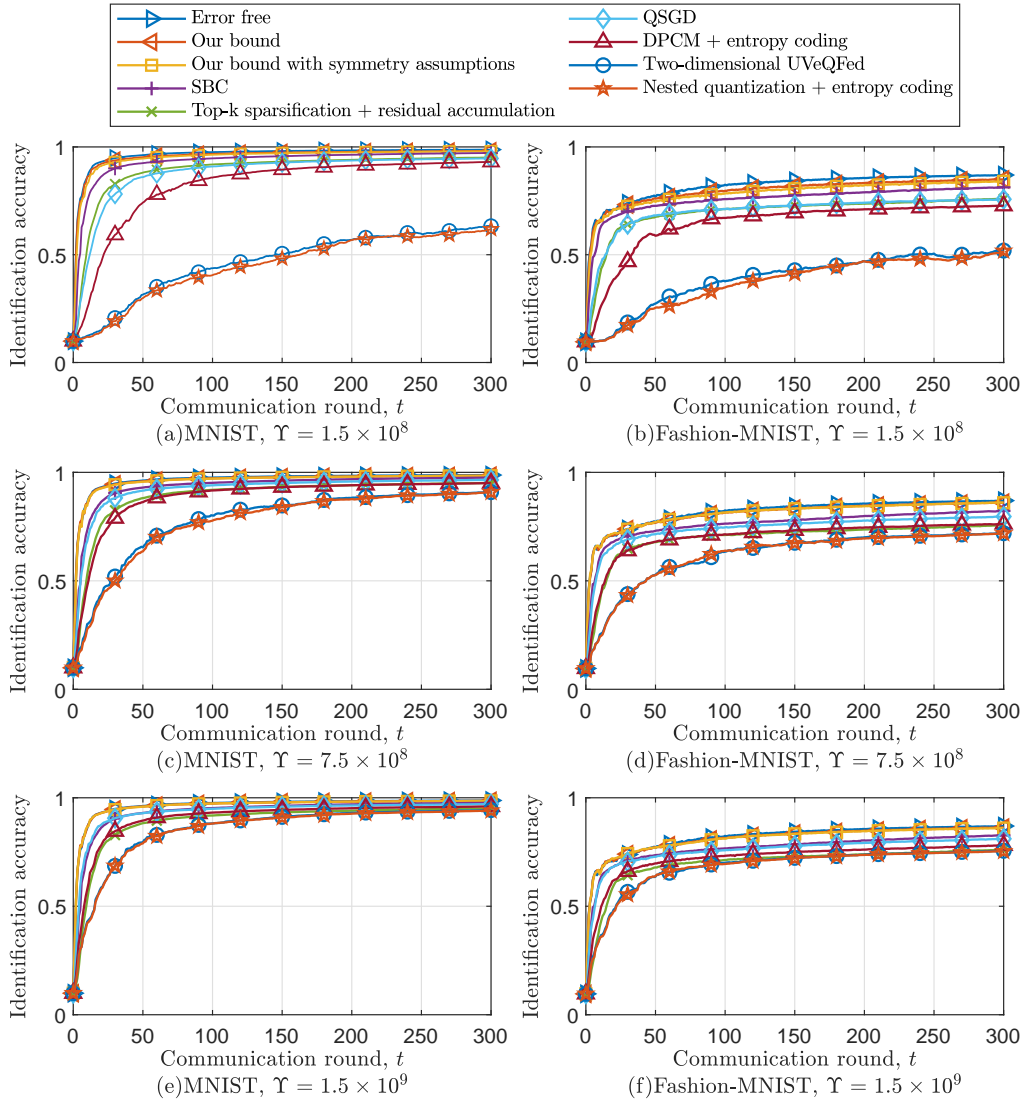}
		\caption{Classification accuracy versus communication round $t$ under MNIST and Fashion-MNIST datasets with different values of channel use times $\Upsilon$.}
		\label{general-figs}
	\end{figure}

	
	Fig. \ref{general-figs} plots the FL convergence performance of the considered schemes under MNIST and Fashion-MNIST datasets with the value of channel use times $\Upsilon$ adopted as $1.5 \times 10^8$, $7.5 \times 10^8$, and $1.5 \times 10^9$, respectively. It is worth emphasizing that when $\Upsilon$ takes $1.5 \times 10^8$, $7.5 \times 10^8$, and $1.5 \times 10^9$, each local update element is quantized on average to $0.603$ bits, $3.015$ bits, and $6.030$ bits, respectively. For the ``Our bound with symmetry assumptions'' case, we divide the eight devices into four groups in pairs such that the devices in the same group have relatively closer bit-constraints. For each group, we make the bit-constraints of both devices to be the same by tightening the looser one, which is always realizable.
	Besides, we assume that Assumption \ref{symmetry-assumption}-(\ref{symmetry-assumption1}) holds. 
	The above settings and assumption allow us to tune the MBTC parameters using Algorithm \ref{proposed-algorithm-for-grouping-case}. For the error free case, the local updates are aggregated without quantization. As shown in Fig. \ref{general-figs}, with the increase of the number of bits allowed to be transmitted in each round (realized by the increase of $\Upsilon$), the convergence rate and the final convergence point of every scheme increases. Further, we observe that our bounds are close to the error-free case, providing a promising convergence performance. Besides, we observe that the performance loss brought by devices grouping is negligible. This implies that, when the symmetry assumptions are satisfied, the grouping method is promising to reduce the complexity of coding parameters optimization with slight performance penalty.
	
	\begin{table*}[]
		\centering
		\caption{Rates ($\times 10^{-4}$ bits/symbol) required to reach a preset classification accuracy at the $100$-th communication round.}
		\label{rate-compare-table}
		\begin{tabular}{llrrrrrrr}
			\hline
			Dataset                              &  & \multicolumn{3}{c}{MNIST}                    & \multicolumn{1}{l}{} & \multicolumn{3}{c}{Fashion-MNIST}           \\ \cline{3-5} \cline{7-9} 
			Classification accuracy              &  & $90\%$        & $93\%$        & $95\%$       &                      & $72\%$        & $75\%$       & $78\%$       \\ \hline
			$2$-D UVeQFed                        &  & $462$         & $982$         & $1790$       &                      & $763$         & $1460$       & $3390$       \\
			Nested quantization $+$ entropy coding &  & $497$         & $1040$        & $2070$       &                      & $786$         & $1620$       & $3470$       \\
			QSGD                                 &  & $57.8$        & $196$         & $705$        &                      & $104$         & $462$        & $2310$       \\
			\textbf{Our bound}                        &  & \textbf{6.93} & \textbf{34.7} & \textbf{173} &                      & \textbf{23.1} & \textbf{127} & \textbf{832} \\ \hline
		\end{tabular}
	\end{table*}
	
	\subsubsection{Communication cost comparison}
	\label{Rate-compare}
	In this part, we compare the minimum source coding (quantization) rates required to achieve given convergence performances. We consider a fully symmetric FL system for simulation. Specifically, we assume that Assumptions \ref{symmetry-assumption}-(\ref{symmetry-assumption1}) and -(\ref{symmetry-assumption2}) hold and consider all the devices share the same source coding (quantization) rate. We set the number of devices $M = 20$ and the source coding (quantization) rate $R = r^{tot} / N$, where the number of symbols $N = 86546$ according to the CNN structure. In the simulation, we increase $r^{tot}$ initialized to a sufficiently small positive number with $100$-bit increment repeatedly until the classification accuracy at the $100$-th communication round is greater than a preset value. TABLE \ref{rate-compare-table} records this first-arrival rate for different schemes under different preset accuracies and datasets. We see that, to achieve a certain accuracy, the rates required by the baselines are all much greater than that needed by our achievability scheme (which leads to our bound).

		We now summarize the qualitative analysis of the numerical results in this section. Our bound is a ``good'' achievable bound mainly due to the following three properties of our proposed achievability scheme: 1) exploit the correlation between local updates for compression; 2) perform joint decoding at PS to directly reconstruct the global update, i.e., a linear combination of the local updates; 3) use infinite-length vector quantization. Due to 1), the baselines in Fig. 3 deviate farther and farther away from our bound as the data correlation increases; due to 3), the baselines in Fig. 3 are consistently worse than our bound, even if the data are completely uncorrelated. Properties 1) and 2) allow the re-allocation of the communication demands among devices by utilizing the correlation between local updates, thereby reducing the number of bits that need to be transmitted by deep-fading devices, explaining the ``good'' performance of our bound in Fig. 5. The results in TABLE II are mainly due to properties 1) and 3) since, under the considered fully symmetric simulation setup, the baseline's aggregation scheme (i.e., first decoding separately and then arithmetically averaging) is efficient enough.

	\section{Conclusion}
	\label{Conclusion}
	
	
In this paper, we studied the FL uplink from an information-theoretic perspective. We introduced a general performance analysis framework for model aggregation. We then characterized the performance limits of model aggregation in the form of an inner region of the rate-distortion region. Further, we developed two algorithms to search for the minimum aggregation distortion in the derived inner region for general and symmetric FL systems, respectively. Numerical results demonstrated that the baseline model aggregation schemes still have great potential for further improvement in the considered scenarios.
	

	\appendices
	
	\section{Proof of Proposition \ref{empirically-Gaussian}} \label{proof-empirically-Gaussian}
	
	Denote $\mathbf{z}_k^N \triangleq \mathbf{A}^{N} \mathbf{p}_k^N$, $\forall k \in [K]$. Since (i) $\mathbf{A}^{N}$ is Haar distributed, (ii) $\{\mathbf{p}_k^N\}_{k=2}^K$ are isotropically distributed, and (iii) $\{\mathbf{p}_k^N\}_{k=1}^K$ are mutually independent, we have $p(\mathbf{z}_{k}^N \mid \mathcal{Z}_k) = p(\mathbf{z}_{k}^N)$, $\forall \mathcal{Z}_k \subseteq \{\mathbf{z}_{i}^N\}_{i=1, i \neq k}^K$, $\mathcal{Z}_k \neq \O$, $k \in [K]$,
	i.e., $\{\mathbf{z}_k^N\}_{k=1}^K$ are mutually independent. Since $\mathbf{A}^{N}$ is Haar distributed, $\mathbf{z}_1^N$ is isotropically distributed; since orthogonal transformation does not change the distribution of an isotropically distributed vector, $\{\mathbf{z}_k^N\}_{k=2}^K$ are also isotropically distributed. Thus, for any $N\in\mathbb{Z}_+$, $\{\mathbf{z}_k^N / \|\mathbf{z}_k^N\|_2\}_{k=1}^K$ can be generated by a group of mutually independent Gaussian vectors $\{\mathbf{q}_k^N\sim \mathcal{N}(\mathbf{0},\mathbf{I}_N)\}_{k=1}^K$ through
	\begin{equation}
	\frac{\mathbf{z}_k^N}{\|\mathbf{z}_k^N\|_2} = \frac{\mathbf{q}_k^{N}}{\|\mathbf{q}_k^{N}\|_2}, \, \forall k \in [K].
	\end{equation}
	Then,
	\begin{equation}
	\label{01}
	\frac{1}{N}\left\|\mathbf{z}_k^N- \tau_k\mathbf{q}_k^{N}\right\|^2_2 = \frac{1}{N}\left\|\mathbf{z}_k^N\right\|_2^2 -\frac{2\tau_k}{N}\left\|\mathbf{z}_k^N\right\|_2 \cdot\left\|\mathbf{q}_k^{N}\right\|_2+\frac{\tau_k^2}{N}\left\|\mathbf{q}_k^{N}\right\|_2^2.
	\end{equation}
	By assumption,
	\begin{equation}
	\label{02}
	\lim_{N\rightarrow \infty} \frac{1}{N} \|\mathbf{z}_k^N\|^2_2 = \lim_{N\rightarrow \infty}  \frac{1}{N} \|\mathbf{p}_k^{N}\|^2_2 \overset{\text{a.s.}}{=} \tau_k^2,
	\end{equation}
	implying
	\begin{equation}
	\label{03}
	\lim_{N\rightarrow \infty}\frac{1}{\sqrt{N}}\left\|\mathbf{z}_k^N\right\|_2 \overset{\text{a.s.}}{=} \tau_k\ \textrm{and}\ \lim_{N\rightarrow \infty}  \frac{1}{N} \mathbb{E}[\|\mathbf{p}_k^{N}\|^2_2] = \tau_k^2.
	\end{equation}
	Since $\mathbf{q}_k^{N}\sim \mathcal{N}(\mathbf{0}, \mathbf{I}_N)$, according to the strong law of large numbers,
	\begin{equation}
	\label{04}
	\lim_{N\rightarrow \infty} \frac{1}{N} \|\mathbf{q}_k^N\|^2_2 \overset{\text{a.s.}}{=} 1.
	\end{equation}
	Note that $\|\mathbf{z}_k^N\|_2$ is a random variable independent of $\mathbf{q}_k^N$. Combining \eqref{01}-\eqref{04}, we obtain
	\begin{equation}
	\lim_{N\rightarrow \infty}\frac{1}{N}\left\|\mathbf{z}_k^N- \tau_k\mathbf{q}_k^{N}\right\|^2_2 \overset{\text{a.s.}}{=} 0.
	\end{equation}
	
	By the definitions,
	\begin{equation}
	\label{07}
	\begin{aligned}
	\sigma_{m_1, m_2}^2 = \lim_{N\rightarrow \infty} \frac{1}{N} \sum_{k=1}^K e_{m_1,k}e_{m_2,k} \mathbb{E}[\|\mathbf{p}_k^N\|_2^2]
	+ \lim_{N\rightarrow \infty} \frac{1}{N} \underset{k_1\neq k_2}{\sum_{k_1=1}^K \sum_{k_2=1}^K} e_{m_1,k_1} e_{m_2,k_2} \mathbb{E}[(\mathbf{p}_{k_1}^N)^\top\mathbf{p}_{k_2}^N].
	\end{aligned}
	\end{equation}
	By assumption, when $k_1 \!\neq\! k_2 \!\in\! [K]$, $\mathbf{p}_{k_1}^N$ and $\mathbf{p}_{k_2}^N$ are mutually independent and at least one of them is isotropically distributed. Without loss of generality, assume $\mathbf{p}_{k_2}^N$ is isotropically distributed. Then we have $\mathbb{E}[(\mathbf{p}_{k_1}^N)\!^\top\mathbf{p}_{k_2}^N \!\mid\! \mathbf{p}_{k_1}^N] = 0$, leading to $\mathbb{E}[(\mathbf{p}_{k_1}^N)\!^\top\mathbf{p}_{k_2}^N] = 0$. Combining with (\ref{03}) and (\ref{07}),
	\begin{equation}
	\sigma_{m_1,m_2}^2 = \sum_{k=1}^K e_{m_1,k}e_{m_2,k} \tau_k^2, \, \forall m_1,m_2 \in [M].
	\end{equation}
	
	Now, let $\tilde{\mathbf{x}}_m^N\triangleq \sum_{k=1}^K e_{m, k} \tau_k \mathbf{q}_k^N$, $\forall m \in [M]$. By the independence of $\{\mathbf{q}_k^N\}_{k=1}^K$, we have $\tilde{\mathbf{x}}_m^N \sim \mathcal{N}\left(\mathbf{0}, \sigma_{m,m}^2\mathbf{I}_N\right)$. Note that for any $n\in [N]$, the independent Gaussian variables $\{[\mathbf{q}^N_k]_n\}^K_{k=1}$ can be viewed as (a special case of) jointly Gaussian variables, and linear combination preserves the joint Gaussianity. Thus the entries $[\tilde{\mathbf{x}}_1^N]_n, [\tilde{\mathbf{x}}_2^N]_n, \dots, [\tilde{\mathbf{x}}_M^N]_n$ are jointly Gaussian with $\mathbb{E}\left[[\tilde{\mathbf{x}}_{m_1}^N]_n]\cdot[\tilde{\mathbf{x}}_{m_2}^N]_n]\right] = \sigma_{m_1,m_2}^2$, $\forall m_1,m_2 \in [M]$, $n\in[N]$. Moreover,
	\begin{align}
	\frac{1}{N}\left\| \mathbf{A}^N\tilde{\mathbf{g}}_m^N - \tilde{\mathbf{x}}_m^N \right\|^2_2=\frac{1}{ N}\left\|\sum_{k=1}^K e_{m, k} (\mathbf{z}_k^N - \tau_k \mathbf{q}_k^N) \right\|^2_2 \leq K\sum_{k=1}^K  e_{m,k}^2 \left(\frac{1}{N} \left\|\mathbf{z}_k^N- \tau_k\mathbf{q}_k^N\right\|^2_2\right) \rightarrow 0
	\end{align}
	almost surely as $N \rightarrow \infty$. We complete the proof.

	\section{Proof of Proposition \ref{FL-convergence}} \label{proof-FL-convergence}
	
	To standardize the notations, we rewrite the lemma used in the proof as follows.
	
	\begin{lemma}[{\cite[Lemma 2.1]{Friedlander2012Hybrid}}] \label{error-influence}
		Under Assumptions \ref{ass::convergence_condition}, set the learning rate $\eta = 1/\Omega$, we have
		\begin{equation}\label{error-influence-inequality}
		L(\boldsymbol{\theta}^{(t+1)}) - L(\boldsymbol{\theta}^{\star}) \leq \left(1 - \frac{\omega}{\Omega}\right)\left[L(\boldsymbol{\theta}^{(t)}) - L(\boldsymbol{\theta}^{\star})\right] + \frac{1}{2\Omega} \|\mathbf{e}^{(t)}\|^2,\forall t \geq 0,
		\end{equation}
		where the gradient error $\mathbf{e}^{(t)} \triangleq \hat{\mathbf{g}}^{(t)} - \mathbf{g}^{(t)}$.
	\end{lemma}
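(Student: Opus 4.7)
The plan is to combine the two parts of Assumption \ref{ass::convergence_condition} in the textbook way: use the Lipschitz-gradient (descent) bound to control $L(\boldsymbol{\theta}^{(t+1)})$ in terms of $L(\boldsymbol{\theta}^{(t)})$, the true gradient $\mathbf{g}^{(t)} = \nabla L(\boldsymbol{\theta}^{(t)})$, and the error $\mathbf{e}^{(t)}$; then use strong convexity to turn the $\|\mathbf{g}^{(t)}\|^2$ term into a suboptimality gap $L(\boldsymbol{\theta}^{(t)}) - L(\boldsymbol{\theta}^\star)$ via a Polyak--Łojasiewicz (PL) style inequality. Since this is the cited lemma of \cite{Friedlander2012Hybrid}, my goal is really to sketch the standard argument so that Proposition \ref{FL-convergence} can invoke it cleanly.

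\textbf{Step 1 (descent lemma).} The update rule reads $\boldsymbol{\theta}^{(t+1)} = \boldsymbol{\theta}^{(t)} - \eta\,\hat{\mathbf{g}}^{(t)}$ with $\hat{\mathbf{g}}^{(t)} = \mathbf{g}^{(t)} + \mathbf{e}^{(t)}$. The Lipschitz-gradient condition $\|\nabla L(\mathbf{b})-\nabla L(\mathbf{a})\|\le \Omega \|\mathbf{b}-\mathbf{a}\|$ implies the standard quadratic upper bound $L(\mathbf{b}) \le L(\mathbf{a}) + \nabla L(\mathbf{a})^\top(\mathbf{b}-\mathbf{a}) + \tfrac{\Omega}{2}\|\mathbf{b}-\mathbf{a}\|^2$. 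Plugging in $\mathbf{a}=\boldsymbol{\theta}^{(t)}$, $\mathbf{b}=\boldsymbol{\theta}^{(t+1)}$ and the update above, and then setting $\eta = 1/\Omega$, the cross term $\tfrac{1}{\Omega}(\mathbf{g}^{(t)})^\top \mathbf{e}^{(t)}$ cancels exactly against the corresponding contribution from expanding $\|\mathbf{g}^{(t)} + \mathbf{e}^{(t)}\|^2$, leaving the clean one-step inequality
\begin{equation*}
L(\boldsymbol{\theta}^{(t+1)}) \le L(\boldsymbol{\theta}^{(t)}) - \frac{1}{2\Omega}\|\mathbf{g}^{(t)}\|^2 + \frac{1}{2\Omega}\|\mathbf{e}^{(t)}\|^2.
\end{equation*}

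\textbf{Step 2 (PL inequality from strong convexity).} From the $\omega$-strong-convexity inequality $L(\mathbf{b}) \ge L(\mathbf{a}) + \nabla L(\mathbf{a})^\top(\mathbf{b}-\mathbf{a}) + \tfrac{\omega}{2}\|\mathbf{b}-\mathbf{a}\|^2$, minimize the right-hand side over $\mathbf{b}$ (the minimizer is $\mathbf{b}=\mathbf{a} - \nabla L(\mathbf{a})/\omega$) to obtain $L(\mathbf{b}) \ge L(\mathbf{a}) - \|\nabla L(\mathbf{a})\|^2/(2\omega)$; taking $\mathbf{b} = \boldsymbol{\theta}^\star$ and $\mathbf{a}=\boldsymbol{\theta}^{(t)}$ yields the PL-type bound $\|\mathbf{g}^{(t)}\|^2 \ge 2\omega\bigl(L(\boldsymbol{\theta}^{(t)}) - L(\boldsymbol{\theta}^\star)\bigr)$. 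Substituting this into the inequality from Step 1 and subtracting $L(\boldsymbol{\theta}^\star)$ from both sides gives exactly \eqref{error-influence-inequality}.

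\textbf{Main obstacle.} There is no real obstacle: the argument is routine convex-analysis bookkeeping. The only thing to be careful about is the algebra in Step 1, namely verifying that the choice $\eta = 1/\Omega$ is precisely what makes the $\eta\,(\mathbf{g}^{(t)})^\top \mathbf{e}^{(t)}$ terms cancel so that the bound depends on $\|\mathbf{e}^{(t)}\|^2$ alone (and not on any coupling like $(\mathbf{g}^{(t)})^\top \mathbf{e}^{(t)}$, which could not be controlled without further assumptions on $\mathbf{e}^{(t)}$). Since this lemma is quoted verbatim from \cite{Friedlander2012Hybrid}, citing it and giving the two-step sketch above suffices.
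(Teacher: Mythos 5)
Your proposal is correct. The paper itself does not prove this lemma---it simply cites \cite[Lemma 2.1]{Friedlander2012Hybrid}---and your two-step argument (the descent lemma with $\eta=1/\Omega$ making the $(\mathbf{g}^{(t)})^\top\mathbf{e}^{(t)}$ cross terms cancel, followed by the PL-type bound $\|\mathbf{g}^{(t)}\|^2 \geq 2\omega\,[L(\boldsymbol{\theta}^{(t)})-L(\boldsymbol{\theta}^{\star})]$ from strong convexity) is exactly the standard derivation underlying the cited result; the algebra checks out.
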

	From the recursion in Lemma \ref{error-influence}, we have
	\begin{equation}\label{recursion-result}
	\begin{aligned}
	L(\boldsymbol{\theta}^{(T+1)}) - L(\boldsymbol{\theta}^{\star}) \leq \left(1 - \frac{\omega}{\Omega}\right)^{T+1}\left[L(\boldsymbol{\theta}^{(0)}) - L(\boldsymbol{\theta}^{\star})\right] + \frac{1}{2\Omega} \sum_{t=0}^T \left(1 - \frac{\omega}{\Omega}\right)^{T-t} \|\mathbf{e}^{(t)}\|^2.
	\end{aligned}
	\end{equation}
	Recall that $\hat{\mathbf{g}}^{(t)}$ is the estimation of $\mathbf{g}^{(t)}$ given by the MBTC-based uplink scheme. Thus, by definitions,
	\begin{equation}
	\frac{1}{N}\|\mathbf{e}^{(t)}\|^2 = \frac{1}{N}\|\hat{\mathbf{g}}^{(t)} - \mathbf{g}^{(t)}\|^2 \leq D^{(t)},
	\end{equation}
	as $N \to \infty$. Thus, we approximately have $\|\mathbf{e}^{(t)}\|^2 \leq N D^{(t)}$.
	Combining with (\ref{recursion-result}), we obtain (\ref{FL-convergence-equation}).


	\section{Proof of Lemma \ref{mutual-information-upper-bound}} \label{proof-mutual-information-upper-bound}
	
	Let $r(\mathbf{u}^{\mathcal{S}} \mid \mathbf{u}^{\mathcal{S}^c})$ be any conditional probability density function. For any nonempty set $\mathcal{S} \subset [M]$,
	\begin{equation}\label{mutual-information-upper-bound-derivation-process-1}
	\begin{aligned} 
	I\left(\mathbf{x}^{\mathcal{S}};\, \mathbf{u}^\mathcal{S} \mid \mathbf{u}^{\mathcal{S}^c}\right)
	=&\int p(\mathbf{x}^{\mathcal{S}}, \mathbf{u}^\mathcal{S}, \mathbf{u}^{\mathcal{S}^c}) \log\frac{p(\mathbf{u}^\mathcal{S} \mid \mathbf{x}^{\mathcal{S}}, \mathbf{u}^{\mathcal{S}^c})}{p(\mathbf{u}^{\mathcal{S}} \mid \mathbf{u}^{\mathcal{S}^c})} \,
	\mathrm{d} \mathbf{x}^{\mathcal{S}}
	\mathrm{d}\mathbf{u}^{\mathcal{S}}
	\mathrm{d}\mathbf{u}^{\mathcal{S}^c} \\
	\stackrel{(a)}{=}& \int p( \mathbf{u}^\mathcal{S}, \mathbf{u}^{\mathcal{S}^c}) \log\frac{1}{p(\mathbf{u}^{\mathcal{S}} \mid \mathbf{u}^{\mathcal{S}^c})} \,
	\mathrm{d}\mathbf{u}^{\mathcal{S}}
	\mathrm{d}\mathbf{u}^{\mathcal{S}^c} 
	- h(\mathbf{u}^{\mathcal{S}}\mid \mathbf{x}^{\mathcal{S}})\\
	\stackrel{(b)}{\leq}& \int p( \mathbf{u}^\mathcal{S}, \mathbf{u}^{\mathcal{S}^c}) \log\frac{1}{r(\mathbf{u}^{\mathcal{S}} \mid \mathbf{u}^{\mathcal{S}^c})} \,
	\mathrm{d}\mathbf{u}^{\mathcal{S}}
	\mathrm{d}\mathbf{u}^{\mathcal{S}^c} 
	- h(\mathbf{u}^{\mathcal{S}}\mid \mathbf{x}^{\mathcal{S}}),
	\end{aligned}
	\end{equation}
	where $h(\mathbf{u}^{\mathcal{S}}\mid \mathbf{x}^{\mathcal{S}})$ denotes the conditional differential entropy of $\mathbf{u}^{\mathcal{S}}$ given $\mathbf{x}^{\mathcal{S}}$. Note that step ($a$) follows from the fact that $\mathbf{u}^\mathcal{S} \leftrightarrow \mathbf{x}^{\mathcal{S}}  \leftrightarrow \mathbf{u}^{\mathcal{S}^c}$, implying $p(\mathbf{u}^\mathcal{S} \mid \mathbf{x}^{\mathcal{S}}, \mathbf{u}^{\mathcal{S}^c}) = p(\mathbf{u}^\mathcal{S} \mid \mathbf{x}^{\mathcal{S}})$, and step ($b$) is obtained by the following result:
	\begin{equation}\label{KL-divergence-is-no-less-than-zero}
	\begin{aligned} 
	&\int p( \mathbf{u}^\mathcal{S}, \mathbf{u}^{\mathcal{S}^c}) \log\frac{1}{r(\mathbf{u}^{\mathcal{S}} \mid \mathbf{u}^{\mathcal{S}^c})} \,
	\mathrm{d}\mathbf{u}^{\mathcal{S}}
	\mathrm{d}\mathbf{u}^{\mathcal{S}^c}
	-
	\int p( \mathbf{u}^\mathcal{S}, \mathbf{u}^{\mathcal{S}^c}) \log\frac{1}{p(\mathbf{u}^{\mathcal{S}} \mid \mathbf{u}^{\mathcal{S}^c})} \,
	\mathrm{d}\mathbf{u}^{\mathcal{S}}
	\mathrm{d}\mathbf{u}^{\mathcal{S}^c}
	\\
	=& \int p(\mathbf{u}^{\mathcal{S}^c}) \int p(\mathbf{u}^{\mathcal{S}} \mid \mathbf{u}^{\mathcal{S}^c})
	\log\frac{p(\mathbf{u}^{\mathcal{S}} \mid \mathbf{u}^{\mathcal{S}^c})}{r(\mathbf{u}^{\mathcal{S}} \mid \mathbf{u}^{\mathcal{S}^c})} \,
	\mathrm{d}\mathbf{u}^{\mathcal{S}}
	\mathrm{d}\mathbf{u}^{\mathcal{S}^c} 
	\\
	=& \int p(\mathbf{u}^{\mathcal{S}^c}) \mathrm{KL}\left( p(\mathbf{u}^{\mathcal{S}} \mid \mathbf{u}^{\mathcal{S}^c})
	\| r(\mathbf{u}^{\mathcal{S}} \mid \mathbf{u}^{\mathcal{S}^c})\right) \,
	\mathrm{d}\mathbf{u}^{\mathcal{S}^c} \geq 0.
	\end{aligned}
	\end{equation}
	According to (\ref{KL-divergence-is-no-less-than-zero}), the equality in (\ref{mutual-information-upper-bound-derivation-process-1}) holds when $r(\mathbf{u}^{\mathcal{S}} \mid \mathbf{u}^{\mathcal{S}^c}) = p(\mathbf{u}^{\mathcal{S}} \mid \mathbf{u}^{\mathcal{S}^c})$.
	
	Recall that $\mathbf{x} \sim \mathcal{N}(\mathbf{0}, \boldsymbol{\Sigma}_{X})$, $\mathbf{v} \sim \mathcal{N}(\mathbf{0}, \boldsymbol{\Sigma}_{V})$ independently from $\mathbf{x}$, and $\mathbf{u} = \mathbf{x} + \mathbf{v}$. Then we have $p(\mathbf{u}) = \mathcal{N}(\mathbf{u}; \mathbf{0}, \boldsymbol{\Sigma}_{\mathrm{U}})$, $p(\mathbf{u}^{\mathcal{S}} \mid \mathbf{u}^{\mathcal{S}^c}) = \mathcal{N}(\mathbf{u}^{\mathcal{S}}; \boldsymbol{\Sigma}_{\mathrm{U}}^{\mathcal{S}, \mathcal{S}^c}(\boldsymbol{\Sigma}_{\mathrm{U}}^{\mathcal{S}^c})^{-1}\mathbf{u}^{\mathcal{S}^c},
	\boldsymbol{\Sigma}_{\mathrm{U}}^{\mathcal{S}} - \boldsymbol{\Sigma}_{\mathrm{U}}^{\mathcal{S}, \mathcal{S}^c}
	(\boldsymbol{\Sigma}_{\mathrm{U}}^{\mathcal{S}^c})^{-1}
	\boldsymbol{\Sigma}_{\mathrm{U}}^{\mathcal{S}^c, \mathcal{S}})$, and 
	\begin{equation}\label{p-u-S-x-S}
	p(\mathbf{u}^{\mathcal{S}} \mid \mathbf{x}^{\mathcal{S}}) 
	= \mathcal{N}(\mathbf{u}^{\mathcal{S}}; \mathbf{x}^{\mathcal{S}}, \boldsymbol{\Sigma}_{V}^{\mathcal{S}}),
	\end{equation}
	where $\boldsymbol{\Sigma}_{\mathrm{U}} =  \boldsymbol{\Sigma}_{X} + \boldsymbol{\Sigma}_{V}$. Note that \eqref{mutual-information-upper-bound-derivation-process-1} holds even when we restrict $r(\mathbf{u}^{\mathcal{S}} \mid \mathbf{u}^{\mathcal{S}^c})$ to be
	\begin{equation} \label{r}
	\setlength{\abovedisplayskip}{3pt}
	r(\mathbf{u}^{\mathcal{S}} \mid \mathbf{u}^{\mathcal{S}^c}) = \mathcal{N}(\mathbf{u}^{\mathcal{S}}; \mathbf{E}\mathbf{u}^{\mathcal{S}^c}, \mathbf{F}),
	\end{equation}
	where auxiliary matrices $\mathbf{E} \in \mathbb{R}^{|\mathcal{S}| \times |\mathcal{S}^c|}$ and $\mathbf{F} \in \mathbb{R}^{|\mathcal{S}| \times |\mathcal{S}|}$ with $\mathbf{F} \succ 0$. In this case, the equality in \eqref{mutual-information-upper-bound-derivation-process-1} holds when $\mathbf{E} = \boldsymbol{\Sigma}_{\mathrm{U}}^{\mathcal{S}, \mathcal{S}^c}(\boldsymbol{\Sigma}_{\mathrm{U}}^{\mathcal{S}^c})^{-1}$ and $\mathbf{F} = \boldsymbol{\Sigma}_{\mathrm{U}}^{\mathcal{S}} - \boldsymbol{\Sigma}_{\mathrm{U}}^{\mathcal{S}, \mathcal{S}^c}
	(\boldsymbol{\Sigma}_{\mathrm{U}}^{\mathcal{S}^c})^{-1}
	\boldsymbol{\Sigma}_{\mathrm{U}}^{\mathcal{S}^c, \mathcal{S}}$. By substituting \eqref{p-u-S-x-S} and (\ref{r}) into the right-hand side of (\ref{mutual-information-upper-bound-derivation-process-1}), we have
	\begin{equation}\label{mutual-information-upper-bound-derivation-process-2}
	\begin{aligned} 
	& \int p( \mathbf{u}^\mathcal{S}, \mathbf{u}^{\mathcal{S}^c}) \log\frac{1}{r(\mathbf{u}^{\mathcal{S}} \mid \mathbf{u}^{\mathcal{S}^c})} \,
	\mathrm{d}\mathbf{u}^{\mathcal{S}}
	\mathrm{d}\mathbf{u}^{\mathcal{S}^c}\\
	=& \frac{1}{2}\log\left((2 \pi)^{|\mathcal{S}|}\mathrm{det}(\mathbf{F})\right)
	+ \frac{\log(e)}{2}  \mathbb{E}_{\mathbf{u}^{\mathcal{S}}, \mathbf{u}^{\mathcal{S}^c}}\left[\left(\mathbf{u}^{\mathcal{S}} - \mathbf{E}\mathbf{u}^{\mathcal{S}^c}\right)^{\top} \mathbf{F}^{-1} \left(\mathbf{u}^{\mathcal{S}} - \mathbf{E}\mathbf{u}^{\mathcal{S}^c}\right)\right] 
	\end{aligned}
	\end{equation}
	where
	\begin{equation}\label{mutual-information-upper-bound-derivation-process-2-plus}
	\begin{aligned} 
	&\mathbb{E}_{\mathbf{u}^{\mathcal{S}}, \mathbf{u}^{\mathcal{S}^c}}\!\left[\left(\mathbf{u}^{\mathcal{S}} \!-\! \mathbf{E}\mathbf{u}^{\mathcal{S}^c}\right)^{\top} \mathbf{F}^{-1} \left(\mathbf{u}^{\mathcal{S}} \!-\! \mathbf{E}\mathbf{u}^{\mathcal{S}^c}\right)\right]
	=\mathrm{tr}\left\{\mathbf{F}^{-1}\mathbb{E}_{\mathbf{u}^{\mathcal{S}}, \mathbf{u}^{\mathcal{S}^c}}\!\left[\left(\mathbf{u}^{\mathcal{S}} \!-\! \mathbf{E}\mathbf{u}^{\mathcal{S}^c}\right)\left(\mathbf{u}^{\mathcal{S}} \!-\! \mathbf{E}\mathbf{u}^{\mathcal{S}^c}\right)^{\top}\right]\right\}\\
	\stackrel{(a)}{=}&
	\mathrm{tr}\left\{\mathbf{F}^{-1} \boldsymbol{\Sigma}_{V}^{\mathcal{S}}\right\}
	\!+\!
	\mathrm{tr}\left\{\mathbf{E}^{\top}\mathbf{F}^{-1}\mathbf{E} \boldsymbol{\Sigma}_{V}^{\mathcal{S}^c}\right\} 
	\!+\!
	\mathrm{tr}\left\{\mathbf{F}^{-1}\left( \boldsymbol{\Sigma}_{X}^{\mathcal{S}}
	\!+\!\mathbf{E}\boldsymbol{\Sigma}_{X}^{\mathcal{S}^c}\mathbf{E}^{\top}
	\!-\!\mathbf{E}\boldsymbol{\Sigma}_{X}^{\mathcal{S}^c, \mathcal{S}}
	\!-\!\boldsymbol{\Sigma}_{X}^{\mathcal{S}, \mathcal{S}^c}\mathbf{E}^{\top}
	\right)\right\},   
	\end{aligned}
	\end{equation}
	with step ($a$) follows from $\boldsymbol{\Sigma}_{U} = \boldsymbol{\Sigma}_{X} + \boldsymbol{\Sigma}_{V}$ with $\boldsymbol{\Sigma}_{V}$ be a diagonal matrix (implying $\boldsymbol{\Sigma}_{U}^{\mathcal{S}} = \boldsymbol{\Sigma}_{X}^{\mathcal{S}} + \boldsymbol{\Sigma}_{V}^{\mathcal{S}}$, $\boldsymbol{\Sigma}_{U}^{\mathcal{S}^c} = \boldsymbol{\Sigma}_{X}^{\mathcal{S}^c} + \boldsymbol{\Sigma}_{V}^{\mathcal{S}^c}$, $\boldsymbol{\Sigma}_{U}^{\mathcal{S}, \mathcal{S}^c} = \boldsymbol{\Sigma}_{X}^{\mathcal{S}, \mathcal{S}^c}$, and $\boldsymbol{\Sigma}_{U}^{\mathcal{S}^c, \mathcal{S}} = \boldsymbol{\Sigma}_{X}^{\mathcal{S}^c, \mathcal{S}}$).
	Similarly, we have
	\begin{equation}\label{mutual-information-upper-bound-derivation-process-3}
	\begin{aligned} 
	h(\mathbf{u}^{\mathcal{S}}\mid \mathbf{x}^{\mathcal{S}})
	=& \frac{1}{2}\log\left((2 \pi)^{|\mathcal{S}|}\mathrm{det}(\boldsymbol{\Sigma}_{V}^{\mathcal{S}})\right) + \frac{\log(e)}{2}  \mathbb{E}_{\mathbf{x}^{\mathcal{S}}, \mathbf{u}^{\mathcal{S}}}\left[\left(\mathbf{u}^{\mathcal{S}} - \mathbf{x}^{\mathcal{S}}\right)^{\top} \left(\boldsymbol{\Sigma}_{V}^{\mathcal{S}}\right)^{-1} \left(\mathbf{u}^{\mathcal{S}} - \mathbf{x}^{\mathcal{S}}\right)\right]\\
	=& \frac{1}{2}\log\left((2 \pi)^{|\mathcal{S}|}\mathrm{det}(\boldsymbol{\Sigma}_{V}^{\mathcal{S}})\right)
	+ \frac{\log(e)}{2}  \mathbb{E}_{\mathbf{v}^{\mathcal{S}}}\left[ \mathrm{tr}\left\{ \left(\boldsymbol{\Sigma}_{V}^{\mathcal{S}}\right)^{-1} \mathbf{v}^{\mathcal{S}} \left(\mathbf{v}^{\mathcal{S}}\right)^{\top}\right\}\right]\\
	=& \frac{1}{2}\log\left((2 \pi e)^{|\mathcal{S}|}\mathrm{det}(\boldsymbol{\Sigma}_{V}^{\mathcal{S}})\right).
	\end{aligned}
	\end{equation}
	Combining (\ref{mutual-information-upper-bound-derivation-process-1}) and (\ref{mutual-information-upper-bound-derivation-process-2})-(\ref{mutual-information-upper-bound-derivation-process-3}), we finally obtain
	\begin{equation}\label{mutual-information-upper-bound-derivation-process-4}
	\setlength{\abovedisplayskip}{3pt}
	\begin{aligned} 
	I\left(\mathbf{x}^{\mathcal{S}};\, \mathbf{u}^\mathcal{S} \mid \mathbf{u}^{\mathcal{S}^c}\right)\leq
	\chi_{\mathcal{S}}(\mathbf{E}, \mathbf{F}, \boldsymbol{\Sigma}_{V}^{\mathcal{S}})
	+ \xi_{\mathcal{S}}(\mathbf{E}, \mathbf{F}),
	\end{aligned}
	\end{equation}
	where the equality holds when $\mathbf{E} = \boldsymbol{\Sigma}_{X}^{\mathcal{S}, \mathcal{S}^c}(\boldsymbol{\Sigma}_{X}^{\mathcal{S}^c} + \boldsymbol{\Sigma}_{V}^{\mathcal{S}^c})^{-1}$ and $\mathbf{F} = \boldsymbol{\Sigma}_{X}^{\mathcal{S}} + \boldsymbol{\Sigma}_{V}^{\mathcal{S}} - \boldsymbol{\Sigma}_{X}^{\mathcal{S}, \mathcal{S}^c}
	(\boldsymbol{\Sigma}_{X}^{\mathcal{S}^c} + \boldsymbol{\Sigma}_{V}^{\mathcal{S}^c})^{-1}
	\boldsymbol{\Sigma}_{X}^{\mathcal{S}^c, \mathcal{S}}$.
	
	In the same way, we obtain an upper bound of $I(\mathbf{x};\,\mathbf{u})$:
	\begin{equation}\label{upper-bound-of-I-x-u}
	\setlength{\abovedisplayskip}{3pt}
	\begin{aligned} 
	I\left(\mathbf{x};\, \mathbf{u}\right)
	\leq
	\chi_{[M]}(\mathbf{G}, \boldsymbol{\Sigma}_{V})
	+ \xi_{[M]}(\mathbf{G}),
	\end{aligned}
	\end{equation}
	where $\mathbf{G} \in \mathbb{R}^{M \times M}$ with $\mathbf{G} \succ 0$. The equality in \eqref{upper-bound-of-I-x-u} holds when $\mathbf{G} = \boldsymbol{\Sigma}_{X} + \boldsymbol{\Sigma}_{V}$.

	\section{Problem Transformation} \label{Problem-Form-Transformation}
	
	We first consider the objective of problem (\ref{original-optimization-problem-three}). According to Assumptions \ref{symmetry-assumption}-(\ref{symmetry-assumption1}) and -(\ref{symmetry-assumption2}), we have
	\begin{equation}
	\boldsymbol{\Sigma}_{X}^{\top} \mathbf{c} = ((M-1)\rho+1) \lambda \sigma_{X}^2 \cdot \mathbf{1}.
	\end{equation}
	Let $\boldsymbol{\Sigma}_{temp}=(\boldsymbol{\Sigma}_{V} + (1-\rho)\sigma_{X}^2 \mathbf{I})^{-1}$. According to the Sherman–Morrison formula, we have
	\begin{equation}
	(\boldsymbol{\Sigma}_{X} + \boldsymbol{\Sigma}_{V})^{-1} = \boldsymbol{\Sigma}_{temp} - \frac{\rho\sigma_{X}^2 \boldsymbol{\Sigma}_{temp}\mathbf{1}\mathbf{1}^{\top} \boldsymbol{\Sigma}_{temp}}{1+\rho\sigma_{X}^2 \mathbf{1}^{\top}\boldsymbol{\Sigma}_{temp}\mathbf{1}}.
	\end{equation}
	Together, the objective function of problem (\ref{original-optimization-problem-three}) can be rewritten as
	\begin{equation}
	\label{objective-new}
	\begin{aligned}
	\mathbf{c}^{\top} \boldsymbol{\Sigma}_{X}(\boldsymbol{\Sigma}_{X} + \boldsymbol{\Sigma}_{V})^{-1} \boldsymbol{\Sigma}_{X}^{\top} \mathbf{c} = \left[((M-1)\rho+1) \lambda \sigma_{X}^2\right]^2 \bigg/\left(\frac{1}{\mathbf{1}^{\top}\boldsymbol{\Sigma}_{temp}\mathbf{1}} + \rho \sigma_{X}^2\right).
	\end{aligned}
	\end{equation}
	To maximize (\ref{objective-new}) by tuning $\{q_{(j)}\}_{j=1}^J$, we only need to maximize $\mathbf{1}^{\top}\boldsymbol{\Sigma}_{temp}\mathbf{1}$, which equals to $\sum_{j=1}^J M_j / (q_{(j)} + (1-\rho) \sigma_{X}^2)$ according to (\ref{original-optimization-problem-three-c_d}). This gives the objective of problem (\ref{original-optimization-problem-four}).
	
	We then consider the constraints of problem (\ref{original-optimization-problem-three}). Since constraint (\ref{original-optimization-problem-three-c_d}) is simple variable substitutions, we focus on constraint (\ref{original-optimization-problem-three-c_b}). From (\ref{mutual-information-1}) and (\ref{mutual-information-2}), constraint (\ref{original-optimization-problem-three-c_b}) is given by
	\begin{equation}
	\label{App-D-5}
	\begin{aligned}
	&I\left(\mathbf{x}^{\mathcal{S}};\, \mathbf{u}^\mathcal{S} \mid \mathbf{u}^{\mathcal{S}^c}\right)
	=\frac{1}{2}\log\left(\frac{\det\left(\boldsymbol{\Sigma}_{X} + \boldsymbol{\Sigma}_{V}\right)}{\det\left(\boldsymbol{\Sigma}_{X}^{\mathcal{S}^c} + \boldsymbol{\Sigma}_{V}^{\mathcal{S}^c}\right) \det\left(\boldsymbol{\Sigma}_{V}^{\mathcal{S}}\right)}\right)\leq \sum_{m\in\mathcal{S}}r_m,
	\forall \mathcal{S} \subset [M],\, \mathcal{S} \neq \O,
	\end{aligned}
	\end{equation}
	\begin{equation}
	\label{App-D-6}
	I\left(\mathbf{x};\, \mathbf{u}\right) = \frac{1}{2}\log\left(\frac{\det\left(\boldsymbol{\Sigma}_{X} + \boldsymbol{\Sigma}_{V}\right)}{\det\left(\boldsymbol{\Sigma}_{V}\right)}\right)\leq \sum_{m\in[M]}r_m.
	\end{equation}
	Further transformation relies on the following key observation. Given nonempty sets $\mathcal{S}_1, \,\mathcal{S}_2 \subset [M]$ satisfy $|\mathcal{S}_1|=|\mathcal{S}_2|$. 
	If the elements in sets $\mathcal{S}_1$ and $\mathcal{S}_2$ are clustered in the same way (i.e., the number of elements assigned to each group is the same), the two constraints corresponding to these two sets have exactly the same form, i.e., they degenerate into one constraint. This inspires us to distinguish constraints by the number of devices selected (by a set $\mathcal{S}$) in each group. Specifically, let $\varsigma_j \in \{0\} \cup [M_j]$ denote the number of devices selected (by a set $\mathcal{S}$) in group $j$, $\forall j \in [J]$. Note that for a matrix $\mathbf{A} = b\cdot\mathbf{1}\mathbf{1}^{\top} + \mathrm{diag}([a_1, \dots, a_M])$, its determinant $\mathrm{det}(\mathbf{A}) = (1 + \sum_{m=1}^M$ $b / a_m) \prod_{m=1}^M a_m$. Thus for any set $\mathcal{S}$ satisfying $|\{s| s\in \mathrm{group}_j, \, s \in \mathcal{S} \}| = \varsigma_j$, $\forall j \in [J]$, we have
	\begin{equation}
	\label{App-D-1}
	\det\left(\boldsymbol{\Sigma}_{V}^{\mathcal{S}}\right) = \prod_{j=1}^J q_{(j)}^{\varsigma_j},
	\end{equation}
	\begin{equation}
	\label{App-D-2}
	\begin{aligned}
	\det\left(\boldsymbol{\Sigma}_{X}^{\mathcal{S}^c} + \boldsymbol{\Sigma}_{V}^{\mathcal{S}^c}\right)
	= \left(1 + \sum_{j=1}^J \frac{(M_j - \varsigma_j) \rho \sigma_{X}^2}{(1 - \rho)\sigma_{X}^2 + q_{(j)}} \right) \prod_{j=1}^J [(1-\rho) \sigma_{X}^2 + q_{(j)}]^{M_j - \varsigma_j}.
	\end{aligned}
	\end{equation}
	Similarly,
	\begin{equation}
	\label{App-D-3}
	\det\left(\boldsymbol{\Sigma}_{V}\right) = \prod_{j=1}^J q_{(j)}^{M_j},
	\end{equation}
	\begin{equation} 
	\label{App-D-4}
	\begin{aligned}
	\det\left(\boldsymbol{\Sigma}_{X} + \boldsymbol{\Sigma}_{V}\right)
	= \left(1 + \sum_{j=1}^J \frac{M_j \rho \sigma_{X}^2}{(1 - \rho)\sigma_{X}^2 + q_{(j)}} \right) \prod_{j=1}^J [(1-\rho) \sigma_{X}^2 + q_{(j)}]^{M_j}.
	\end{aligned}
	\end{equation}
	Substituting (\ref{App-D-1})-(\ref{App-D-4}) into (\ref{App-D-5}) and (\ref{App-D-6}), we obtain the left-hand side of (\ref{original-optimization-problem-four-constraints}). The right-hand side of (\ref{original-optimization-problem-four-constraints}) can be directly obtained according to Assumption \ref{symmetry-assumption}-(\ref{symmetry-assumption3}).

	\ifCLASSOPTIONcaptionsoff
	\newpage
	\fi

	
	
	%

	\bibliographystyle{IEEEtran}
	
	\bibliography{ref}

	


	%

	
	
	
	
	
	

\end{document}